\newtheorem{theorem}{Theorem}{\bfseries}{\itshape}
\newtheorem{lemma}{Lemma}{\bfseries}{\itshape}
\newtheorem{proposition}{Proposition}{\bfseries}{\itshape}
\newtheorem{remark}{Remark}{\bfseries}{\itshape}
\newtheorem{corollary}{Corollary}{\bfseries}{\itshape}
\newtheorem{claimN}{Claim}{\bfseries}{\itshape}
\newcommand{\paraprobl}[4]
{
  \begin{flushleft}
    \fbox{
      \begin{minipage}{14.25cm}
        \noindent {\textsc {#1}}\\
        {\bf Input:} #2\\
        {\bf Parameter:} #4\\
        {\bf Question:} #3
      \end{minipage}
    }
  \end{flushleft}
}
\author{Luerbio Faria\affiliationmark{1}
  \and Sulamita Klein\affiliationmark{2}
  \and Ignasi Sau\affiliationmark{3,4}
  \and Rubens Sucupira\affiliationmark{1}}
\title[Improved Kernels for \textsc{Signed Max Cut ATLB} on $(r,\ell)$-graphs]{Improved Kernels for Signed Max Cut Parameterized Above Lower Bound on $(r,\ell)$-graphs\thanks{This work was partially supported by CNPq, CAPES, FAPERJ, and COFECUB.}}
\affiliation{
  Instituto de Matem\'atica e Estat\'{i}stica - UERJ and  COPPE/Sistemas - UFRJ, Rio de Janeiro, Brazil\\
  Instituto de Matem\'atica and COPPE/Sistemas - UFRJ, Rio de Janeiro, Brazil\\
  CNRS, LIRMM, Universit\'e de Montpellier, Montpellier, France\\
  Departamento de Matem\'atica, Universidade Federal do Cear\'a, Fortaleza, Brazil}
\keywords{max cut, $(r,\ell)$-graphs,  split graphs, parameterized complexity, parameterization above lower bound, polynomial kernels}
\begin{document}
\publicationdetails{19}{2017}{1}{14}{1540}
\maketitle
\begin{abstract}
  A graph $G$ is {\it signed} if each edge is assigned ``$+$'' or ``$-$''.
A signed graph is {\it balanced} if
there is a bipartition of its vertex set such that
an edge has sign ``$-$'' if and only if its endpoints are in different parts.
The Edwards-Erd\H{o}s bound states that
every signed graph with $n$ vertices and $m$ edges has a balanced subgraph with at least
$\frac{m}{2}+\frac{n-1}{4}$ edges. In the {\sc Signed Max Cut Above Tight Lower Bound}
({\sc Signed Max Cut ATLB}) problem, given a signed graph $G$ and a  parameter
$k$, the question is whether $G$ has a balanced subgraph with at least
$\frac{m}{2}+\frac{n-1}{4}+\frac{k}{4}$ edges. This problem generalizes \textsc{Max Cut Above Tight Lower Bound}, for which a kernel with $O(k^5)$ vertices was given by Crowston et al.~[ICALP 2012, Algorithmica 2015]. Crowston et al. [TCS 2013] improved this result by providing a kernel with $O(k^3)$ vertices for the more general {\sc Signed Max Cut ATLB} problem. In this article we are interested in improving the size of the kernels for {\sc Signed Max Cut ATLB} on restricted graph classes for which the problem remains hard.  For two integers $r,\ell \geq 0$, a graph $G$ is an \emph{$(r,\ell)$-graph} if $V(G)$ can be partitioned into $r$ independent
 sets and $\ell$ cliques. Building on the techniques of Crowston et al. [TCS 2013], for any $r,\ell \geq 0$ we provide a kernel with $O((r + \ell)k^2)$ vertices on $(r,\ell)$-graphs, and a simple linear kernel on subclasses of split graphs for which we prove that the problem is still {\sf{NP}}-hard.
\end{abstract}

\section{Introduction}
\label{sec:intro}

A graph $G=(V,E)$ is a {\it{signed graph}} if each edge is assigned {\it{positive}} $(``+'')$ or {\it{negative}} $(``-'')$. In this paper all graphs will be signed graphs. The labels ``$+$'' or ``$-$'' are the {\it{signs}} of the corresponding edges. If $(V_1,V_2)$ is a bipartition of the vertex set of a graph $G$, we say that $G$ is {\it{$(V_1,V_2)$-balanced}} if the edges with both endpoints in $V_1$ or both endpoints in $V_2$ are positive and the edges with endpoints in distinct parts are negative. A graph $G$ is {\it{balanced}} if there exists a partition $(V_1,V_2)$ of $V(G)$ such that $G$ is $(V_1,V_2)$-balanced. (Note that we can use an algorithm for {\sc 2-sat} to determine whether a signed graph is balanced.)
	
The problem of finding a balanced subgraph of a signed graph with maximum number of edges is called {\sc{Signed Max Cut}} and is {\sf{NP}}-hard, as it is a generalization of {\sc{Max Cut}}, which is well known to be {\sf{NP}}-hard~\cite{garey1976some} (indeed, {\sc{Max Cut}} is exactly {\sc{Signed Max Cut}} when all edges of $G$ are negative). We refer to the work of Crowston et al.~\cite{crowston2013maximum} and the references therein for some applications of the {\sc{Signed Max Cut}} problem.

In this article we study {\sc{Signed Max Cut}} from the perspective of Parameterized Complexity; we refer to the monographs~\cite{FG06,Nie06,DF13,CyganFKLMPPS15} for an introduction to the field. Namely, we consider a so-called \emph{parameterization above lower bound} of the problem, and we are particularly interested in obtaining small polynomial kernels on restricted graph classes. Let us first discuss \textsc{Max Cut}. Edwards~\cite{edwards, edwards2} showed that any connected
graph with $n$ vertices and $m$ edges\footnote{We assume that all input graphs of the problems under consideration have $n$ vertices and $m$ edges.}  has a cut of size $\frac{m}{2}+\frac{n-1}{4}$, and that this value is \emph{tight} (that is, that there exist infinitely many graphs for which the size of a maximum cut equals this bound). This bound is commonly known as the {\it{Edwards-Erd{\H{o}}s}} bound, and justifies the following parameterization of \textsc{Max Cut} recently considered by Crowston et al.~\cite{crowston2012max}: given a graph $G$ and a parameter $k$, decide whether $G$ has a cut of size at least $\frac{m}{2}+\frac{n-1}{4}+\frac{k}{4}$. They provided {\sf{FPT}}-algorithms for this parameterization of the problem and a kernel of size $O(k^5)$ on general graphs.


Coming back to the {\sc{Signed Max Cut}} problem, Poljak and Turz\'{\i}k \cite{poljak1986polynomial} proved that $\frac{m}{2}+\frac{n-t}{4}$ is also a tight lower bound on the number of edges in a balanced subgraph of any signed graph $G$ with $t$ connected components; we denote this lower bound by ${\sf pt}(G)$.
Motivated by the above result for \textsc{Max Cut}, Crowston et al. \cite{crowston2013maximum} considered the following parameterization of {\sc{Signed Max Cut}}, which is the same we consider in this article:


\vspace{.15cm}
\paraprobl
{\textsc{Signed Max Cut Above Tight Lower Bound}}
{ A connected signed graph $G$  and a positive integer $k$.}
{Does $G$ contain a balanced subgraph with at least $\frac{m}{2}+\frac{n-1}{4}+\frac{k}{4}$ edges?}
{$k$.}
\vspace{.15cm}	


We call the above problem {\sc{Signed Max Cut ATLB}} for short. Crowston et al.~\cite{crowston2013maximum} proved that {\sc{Signed Max Cut ATLB}} is also {\sf{FPT}}  and provided a kernel of size
 $O(k^3)$ on general graphs, therefore improving the kernel of size
 $O(k^5)$ for the particular case of {\sc{Max Cut ATLB}} given by Crowston et al.~\cite{crowston2012max}.

\vspace{.15cm}\noindent
\textbf{Our results}. In this article we are interested in improving the size of the cubic kernel for {\sc{Signed Max Cut ATLB}} by Crowston et al.~\cite{crowston2013maximum} on particular graph classes for which the problem remains {\sf{NP}}-hard. In particular, we focus on $(r,\ell)$-graphs. For two integers $r,\ell \geq 0$, a graph $G$ is an \emph{$(r,\ell)$-graph} if $V(G)$ can be partitioned into $r$ independent sets and $\ell$ cliques. These graph classes contain, for example, split graphs or bipartite graphs, and have been extensively studied in the literature~\cite{FoHa77,Golumbic04,book-graph-classes,Bra96,FHKM03,BFKS15,KoPa15}. In particular,
it is known that the recognition of $(r,\ell)$-graphs is {\sf NP}-complete if and only if $\max\{r,\ell\} \geq 3$~\cite{Bra96,FHKM03}.

Note that for this parameterization of  \textsc{Signed Max Cut} to make sense on $(r,\ell)$-graphs, the bound $\frac{m}{2}+\frac{n-1}{4}$ should be also {\sl tight} on these graph classes. Fortunately, this is indeed the case: let $G$ be a clique with odd number of vertices, which is clearly an $(r,\ell)$-graph as long as $\ell > 0$ (in particular, a split graph), and such that all edges are negative. It is clear that the largest balanced subgraph of $G$ is given by the crossing edges of any bipartition $(V_1,V_2)$ of $V(G)$ with $|V_1|= \frac{n+1}{2}$ and $|V_2|= \frac{n-1}{2}$. The number of edges of such a balanced subgraph, namely  $\frac{n^2-1}{4}$, equals  ${\sf pt}(G) = \frac{n(n-1)}{4}+ \frac{n-1}{4}$.

From a parameterized complexity perspective, split graphs have received considerable attention. For instance, Raman and Saurabh~\cite{RamanS08} proved that \textsc{Dominating Set} is {\sf{W[2]}}-hard on connected split graphs, and Ghosh~et al.~\cite{GhoshK0MPRR15} provided  {\sf{FPT}}-algorithms for the problem of deleting at most $k$ vertices to obtain a split graph. Concerning kernelization, Heggernes et al.~\cite{HeggernesHLS15} studied the \textsc{Disjoint Paths} problem on split graphs parameterized by the number of pairs of terminals, and provided a kernel of size $O(k^2)$ (resp. $O(k^3)$) for the vertex-disjoint (resp. edge-disjoint) version of the problem. As for general $(r,\ell)$-graphs from a parameterized point of view, recently a dichotomy on the parameterized complexity for the problem of deleting at most $k$ vertices to obtain an $(r,\ell)$-graph has been independently obtained by Baste et al.~\cite{BFKS15} and by Kolay and Panolan~\cite{KoPa15}. To the best of our knowledge, this is the first article that focuses on obtaining polynomial kernels on $(r,\ell)$-graphs for arbitrary values of $r$ and $\ell$.


Our main result (Theorem~\ref{thm:quadratic-general} in Section~\ref{sec:quadratic-kernel}) is that {\sc{Signed Max Cut ATLB}} admits a quadratic kernel on $(r,\ell)$-graphs for every fixed $r,\ell \geq 0$. More precisely, the kernel has $O((r + \ell)k^2)$ vertices. Our techniques in order to prove Theorem~\ref{thm:quadratic-general} are strongly based on the ones used by Crowston et al.~\cite{crowston2013maximum}, and for improving their cubic bound we further exploit the structure of $(r,\ell)$-graphs in the analysis of the algorithm. In fact, our kernelization algorithm consists in applying exhaustively the reduction rules of Crowston et al.~\cite{crowston2013maximum}, only the analysis changes. As when  $\max\{r,\ell\} \geq 3$ the recognition of $(r,\ell)$-graphs is {\sf NP}-complete~\cite{Bra96,FHKM03}, we stress here that, given an instance $(G,k)$, we do {\sl not} need to obtain any partition of $V(G)$ into $r$ independent sets and $\ell$ cliques; we only use the {\sl existence} of such partition in the analysis. Since the analysis of the quadratic kernel for the particular case of split graphs is simpler, we prove it separately in Subsection~\ref{ap:quadratic-split}.

In Section~\ref{sec:linear-kernel} we present a linear kernel for {\sc{Signed Max Cut ATLB}} on subclasses of split graphs. Namely, these subclasses are what we call \emph{$d^*$-split graphs} for every integer $d \geq 1$ (see Section~\ref{sec:linear-kernel} for the definition). We first prove  that even \textsc{Max Cut} is {\sf{NP}}-hard on $d^*$-split graphs for every integer $d \geq 2$, and in Theorem~\ref{thm:linear-d-split} we provide the linear kernel. As discussed later, this kernelization algorithm is the simplest possible algorithm that one could imagine, as it does {\sl nothing} to the input graph; its interest lies on the analysis of the kernel size, which is non-trivial. In particular, the analysis uses a new reduction rule that we introduce in Section~\ref{s2}.

\section{Preliminaries}
\label{s2}
	
We use standard graph-theoretic notation; see for instance Diestel's book~\cite{Diestel05}. All the graphs we consider are undirected and contain neither loops nor multiple edges. If $S \subseteq V(G)$, we define $G[S]=(S, \{\{u,v\} \in E(G)\mid u,v \in S\})$ and $G-S = G[V(G) \setminus S]$. A graph $G=(V,E)$ is a {\it{split graph}} if there is a partition of $V$ into an independent set $I$ and a clique $K$. Split graphs can be generalized as follows. Let $r,\ell$ be two positive integers. A graph $G=(V,E)$ is an {\it{$(r,\ell)$-graph}} if $V$ can be partitioned into at most $r$ independent sets $I^1, I^2, I^3, \ldots, I^r$  and at most $\ell$ cliques $K^1, K^2, K^3, \ldots, K^{\ell}$.



Given a signed graph $G=(V,E)$, a cycle $C$ in $G$ is called {\it{positive}} if the number of negative edges in $C$ is even. Otherwise $C$ is called {\it{negative}}.
Harary~\cite{harary1953}  proved that the absence of {\it{negative}} cycles characterizes balanced graphs.
	
\begin{theorem}[Harary~\cite{harary1953}]\label{theorem2.1}
A signed graph $G$ is balanced if and only if every cycle in $G$ is positive.	
\end{theorem}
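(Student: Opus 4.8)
The plan is to prove the two implications separately. The forward direction is a short parity argument, while the reverse direction requires building an explicit bipartition, and that is where the real work lies.

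For the ($\Rightarrow$) direction, assume $G$ is $(V_1,V_2)$-balanced for some bipartition $(V_1,V_2)$ of $V(G)$. Let $C = v_0 v_1 \cdots v_{k-1} v_0$ be any cycle. By definition of balance, an edge $v_i v_{i+1}$ of $C$ is negative exactly when $v_i$ and $v_{i+1}$ lie in different parts. Walking around $C$ and recording in which part the current vertex lies, each negative edge toggles the part while each positive edge keeps it; since the walk returns to $v_0$, the number of toggles must be even. Hence $C$ contains an even number of negative edges, i.e.\ $C$ is positive.

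For the ($\Leftarrow$) direction, assume every cycle of $G$ is positive. I may assume $G$ is connected, since otherwise I treat each connected component separately and take the unions of the resulting parts; as there are no edges between distinct components, this does not affect the balance condition. Fix a spanning tree $T$ of $G$ and a root $v_0 \in V(G)$. For each vertex $v$, let $f(v) \in \{0,1\}$ be the parity of the number of negative edges on the unique $v_0$--$v$ path in $T$, and set $V_1 = f^{-1}(0)$ and $V_2 = f^{-1}(1)$. I will show that $G$ is $(V_1,V_2)$-balanced, i.e.\ that every edge $e = uv$ of $G$ is positive if and only if $f(u) = f(v)$. For $e \in E(T)$ this is immediate from the definition of $f$. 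For $e \notin E(T)$, consider the fundamental cycle $C_e$ formed by $e$ together with the $u$--$v$ path $P_{uv}$ in $T$; this is a genuine (simple) cycle since $G$ has no loops or multiple edges, so in particular the hypothesis applies to it. The parity of the number of negative edges on $P_{uv}$ equals $f(u) + f(v) \pmod 2$, because the portions of the two root paths below the least common ancestor of $u$ and $v$ are shared and cancel modulo $2$. Since $C_e$ is positive, the sign contribution of $e$ plus this parity is even; therefore $e$ is negative iff $f(u) \neq f(v)$, which is exactly what we needed.

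The main obstacle is the bookkeeping in the reverse direction: one must reduce "every cycle is positive" to "every fundamental cycle with respect to $T$ is positive," and then correctly relate the negative-edge parity along the tree path $P_{uv}$ to $f(u) + f(v)$ via the least-common-ancestor cancellation. It is also worth checking the small cases: there are no loops, so every cycle has length at least $3$ and no degenerate "cycle" can violate the argument; and the subtle point is that a non-tree edge whose endpoints have equal $f$-value is forced to be positive precisely because $C_e$ is positive, so no edge escapes the classification.
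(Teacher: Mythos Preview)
Your proof is correct and follows the standard argument for Harary's characterization. Note, however, that the paper does not actually prove this theorem: it merely states it and attributes it to Harary~\cite{harary1953}, so there is no ``paper's own proof'' to compare against. Your argument (parity count for the forward direction, spanning-tree labeling and fundamental-cycle check for the converse) is the classical one and would serve well as a self-contained proof if the paper chose to include one.
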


The following lemma by Crowston et al.~\cite{crowston2013maximum} is very useful for our purposes, as it gives a lower bound on the maximum size of a balanced signed subgraph of a graph. Let $G=(V,E)$ be a signed graph and let $U, W\subseteq V$. We denote by $E(U,W)$ the subset of $E$ formed by the edges that have one endpoint in $U$ and the other in $W$. We also let  $\beta(G)$ denote the maximum number of edges in a balanced subgraph of $G$.

	\begin{lemma}[Crowston et al.~\cite{crowston2013maximum}]
 Let $G=(V,E)$ be a connected signed graph and let $V=U\cup W$ such that $U\cap W=\emptyset$, $U\neq \emptyset$, and $W\neq \emptyset$. Then $\beta(G)\geq \beta(G[U])+\beta(G[W])+\frac{1}{2}|E(U,W)|$. In addition, if $G[U]$ has $c_1$ components, $G[W]$ has $c_2$ components, $\beta(G[U])\geq {\sf pt}(G[U])+\frac{k_1}{4}$, and $\beta(G[W])\geq {\sf pt}(G[W])+\frac{k_2}{4}$, then $\beta(G)\geq {\sf pt}(G)+\frac{k_1+k_2-(c_1+c_2-1)}{4}$.  	 \label{beta}
	\end{lemma}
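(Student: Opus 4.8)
The plan is to establish the two inequalities in turn, the second by reduction to the first.

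\textbf{First inequality.} Let $B_U$ be a balanced subgraph of $G[U]$ with $\beta(G[U])$ edges, witnessed by a bipartition $(U_1,U_2)$ of $U$ with $B_U$ being $(U_1,U_2)$-balanced; similarly let $B_W$ be a balanced subgraph of $G[W]$ with $\beta(G[W])$ edges, witnessed by $(W_1,W_2)$. I would build a bipartition $(V_1,V_2)$ of $V$ by combining these: set $V_1 = U_1 \cup W_1$ and $V_2 = U_2 \cup W_2$ (either alignment works for this first bound; the point is that each of $B_U$, $B_W$ is compatible with this bipartition). Now consider the subgraph $B$ of $G$ consisting of the edges of $B_U$, the edges of $B_W$, and every edge of $E(U,W)$ whose sign is consistent with $(V_1,V_2)$ (i.e.\ positive if both endpoints lie in the same part, negative otherwise). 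By construction $B$ is $(V_1,V_2)$-balanced, hence balanced. For the count, $B$ has $\beta(G[U]) + \beta(G[W])$ edges from inside $U$ and $W$, plus the consistent edges of $E(U,W)$. Here is where one averages: instead of fixing an alignment, note that among the two alignments $V_1 = U_1 \cup W_1$ versus $V_1 = U_1 \cup W_2$, each edge of $E(U,W)$ is consistent in exactly one of them, so by picking the better alignment we keep at least $\frac12 |E(U,W)|$ of these crossing edges. This yields $\beta(G) \geq \beta(G[U]) + \beta(G[W]) + \frac12|E(U,W)|$.

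\textbf{Second inequality.} Now plug in the hypotheses. Using $\beta(G[U]) \geq {\sf pt}(G[U]) + \frac{k_1}{4}$ and $\beta(G[W]) \geq {\sf pt}(G[W]) + \frac{k_2}{4}$, the first inequality gives
\[
\beta(G) \;\geq\; {\sf pt}(G[U]) + {\sf pt}(G[W]) + \frac{k_1 + k_2}{4} + \frac12|E(U,W)|.
\]
It remains to compare ${\sf pt}(G[U]) + {\sf pt}(G[W]) + \frac12|E(U,W)|$ with ${\sf pt}(G)$. Recall ${\sf pt}(H) = \frac{|E(H)|}{2} + \frac{|V(H)| - (\text{number of components of }H)}{4}$. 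Since $|V(G)| = |U| + |W|$, $|E(G)| = |E(G[U])| + |E(G[W])| + |E(U,W)|$, and $G$ is connected (one component), we compute
\[
{\sf pt}(G) \;=\; {\sf pt}(G[U]) + {\sf pt}(G[W]) + \frac{|E(U,W)|}{2} + \frac{(c_1 + c_2) - 1}{4}.
\]
Substituting this into the previous display gives exactly $\beta(G) \geq {\sf pt}(G) + \frac{k_1 + k_2 - (c_1 + c_2 - 1)}{4}$, as claimed.

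\textbf{Main obstacle.} The only genuinely delicate point is the bookkeeping in the first inequality: one must be careful that the averaging argument over the two alignments is legitimate, i.e.\ that flipping $W_1 \leftrightarrow W_2$ does not disturb the balancedness of $B_W$ (it does not, since a balanced subgraph stays balanced under swapping the two sides of its bipartition) and genuinely flips the consistency status of every crossing edge, so that each such edge contributes to exactly one of the two alignments and the better one retains at least half of $|E(U,W)|$. Everything else is a routine substitution using the definition of ${\sf pt}$ and additivity of vertex counts, edge counts, and component counts across the partition $V = U \cup W$.
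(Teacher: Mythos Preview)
Your proof is correct. Note, however, that the paper does not actually prove this lemma: it is quoted as a result of Crowston et al.\ and stated without proof, so there is no ``paper's own proof'' to compare against. Your argument is the standard one for this statement --- the averaging over the two alignments of the bipartitions to pick up half of $E(U,W)$, followed by the straightforward ${\sf pt}$ bookkeeping --- and it is exactly what one would expect the original proof to look like.
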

	

	We say that a problem is {\it{fixed-parameter tractable}} ({\sf{FPT}})~\cite{FG06,Nie06,DF13,CyganFKLMPPS15} with respect to parameter $k$
if there exists an algorithm that solves the problem  in time $f(k)\cdot n^{O(1)}$, where $f$ is a computable function of $k$ which is independent
of $n$.
	{\it{Kernelization}} is an important
technique used to shrink the size of a given problem instance by means of
polynomial-time data reduction rules until the size of this instance is bounded by a function of
the parameter $k$. The reduced instance is called a problem {\it{kernel}}. Once a problem
kernel is obtained, we know that the problem is fixed-parameter tractable, since
the running time of any brute force algorithm depends on the parameter $k$ only.
The converse is also true: whenever a parameterized problem is {\sf{FPT}}, then it admits a kernel~\cite{FG06,Nie06,DF13,CyganFKLMPPS15}. The natural question to be asked about kernels is whether a parameterized problem admits a {\em polynomial} kernel or not, that is, a kernel of size $k^{O(1)}$.

	In their article, Crowston et al.~\cite{crowston2013maximum} proved that {\sc{Signed Max Cut ATLB}} is {\sf{FPT}} on general graphs by designing an algorithm  running in  time $2^{3k}\cdot n^{O(1)}$. The algorithm applies some reduction rules to the input $(G,k)$ that either answer that $(G,k)$ is a {\sc{Yes}}-instance,  or produce a set $S$ of at most $3k$ vertices such that $G-S$ is a forest of cliques, or equivalently \emph{clique-forest}, which is a graph such that every 2-connected component (that is, each block) is a clique without positive edges. (Note that this definition differs from the usage of this term in the context of chordal graphs.) We state this property formally as follows.

\begin{proposition}[Crowston et al.~\cite{crowston2013maximum}]
\label{coro4.1}\label{coroker}
Let $(G,k)$ be an instance of {\sc{Signed Max Cut ATLB}}. In polynomial time we can conclude that $(G,k)$ is a {\sc{Yes}}-instance or we can find a set $S$ of at most $3k$ vertices such that $G-S$ is a clique-forest without positive edges.
\end{proposition}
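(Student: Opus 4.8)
The plan is to run a polynomial-time procedure on $(G,k)$ that maintains a set $S\subseteq V(G)$, initially empty, and a residual parameter, initially $k$, with the invariants that the current instance has the same answer as the original one and that $S$ grows by at most three vertices each time the parameter is decreased by one. The procedure halts and reports \textsc{Yes} as soon as the residual parameter is nonpositive — the current instance is then trivially a \textsc{Yes}-instance, since $\beta(G')\ge {\sf pt}(G')$ holds for every signed graph $G'$ (Poljak--Turz\'{\i}k), and the invariant transfers this back to $(G,k)$ — and otherwise halts once $G-S$ is a clique-forest without positive edges, at which point $|S|\le 3k$. It is convenient to recall that a connected graph is a clique-forest exactly when it is chordal and contains no induced $K_4$ minus an edge; hence the stopping condition is testable in polynomial time, and whenever it fails $G-S$ contains a concrete small obstruction: a positive edge, an induced cycle of length at least four, or an induced diamond.

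First I would clear positive bridges. If $uv$ is a positive bridge of $G$, then some maximum balanced subgraph places $u$ and $v$ in the same part (recolour one side of the bridge), so contracting $uv$ does not change the answer; since $\beta(G)=\beta(G/uv)+1$ and ${\sf pt}(G)={\sf pt}(G/uv)+\tfrac34$, this step decreases the parameter by exactly one and removes one vertex without touching $S$. After this rule is applied exhaustively, every positive edge of $G-S$ lies inside a $2$-connected block, and a block of $G-S$ fails to be a clique only if it contains an induced cycle of length at least four or an induced diamond (since a $2$-connected graph on at least three vertices that is both chordal and diamond-free is complete).

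As long as $G-S$ is not a clique-forest without positive edges, I would pick an obstruction $H$ of the above kind, lying inside a single block $B$ of $G-S$, and apply the matching reduction rule of Crowston et al.~\cite{crowston2013maximum}. The soundness of every such rule rests on Lemma~\ref{beta}: taking $U$ to be a suitably chosen subset of $V(B)$ around $H$ and $W=V(G)\setminus U$, and bounding $\beta(G[U])$ and $\beta(G[W])$ by their respective Poljak--Turz\'{\i}k bounds, the additional inequality of Lemma~\ref{beta} yields $\beta(G)\ge {\sf pt}(G)+\tfrac{k_1+k_2-(c_1+c_2-1)}{4}$; depending on the configuration this either already certifies $\beta(G)\ge {\sf pt}(G)+\tfrac k4$ (answer \textsc{Yes}), or it lets us move at most three carefully chosen vertices of $H$ into $S$ while decreasing the parameter by one — the point being that those three vertices are dense enough inside $B$ that the $\tfrac12$-contributions to $|E(U,W)|$ of their incident edges offset the penalty coming from the connected components their removal creates. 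Since every step either outputs \textsc{Yes} or strictly decreases the nonnegative parameter, there are at most $k$ steps overall, the procedure runs in polynomial time, and at termination $|S|\le 3k$.

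The main obstacle is exactly this accounting inside the obstruction rule: for each possible shape of $H$ relative to the block-cut structure of $G-S$ — holes of arbitrary length, induced diamonds, and positive edges of the different possible types — one must select at most three vertices whose deletion (i) moves $G-S$ strictly closer to being a clique-forest without positive edges and (ii) costs at most one quarter of Edwards--Erd\H{o}s slack in Lemma~\ref{beta}, which forces a simultaneous bound on the cross-edges removed and on the number of new connected components. Carrying this out uniformly over all cases, and verifying that the inequality of Lemma~\ref{beta} is tight enough in each of them, is where the analysis of Crowston et al.~\cite{crowston2013maximum} does the real work.
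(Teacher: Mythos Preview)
The paper does not prove this proposition; it is quoted from Crowston et al.~\cite{crowston2013maximum} and used as a black box throughout, so there is no in-paper argument to compare your sketch against.

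Your outline has the right overall shape, and your characterisation of clique-forests via forbidden holes and diamonds is correct, but the first step contains a real inconsistency. You \emph{contract} each positive bridge $uv$, correctly arguing that $(G,k)$ and $(G/uv,k-1)$ are equivalent, and say this ``removes one vertex without touching $S$''. But the proposition demands $S\subseteq V(G)$ with $G-S$ a positive-edge-free clique-forest for the \emph{input} $G$. Contraction changes the graph: once you have exhausted this rule you are working in some $G_0\ne G$, and any $S_0\subseteq V(G_0)$ you subsequently build is not a subset of $V(G)$, nor does $G_0-S_0$ being a clique-forest say anything directly about $G-S$ for any $S$. (Concretely, if the contracted vertex $w\notin S_0$, pulling $S_0$ back to $G$ verbatim leaves the positive edge $uv$ in $G-S_0$.) The argument in~\cite{crowston2013maximum} uses only vertex-\emph{deletion} one-way rules (their Rules~1--7), precisely so that the set of removed vertices \emph{is} $S$ and $G-S$ is literally the reduced graph; your contraction step breaks this identification and you provide no mechanism to repair it. The remainder of your plan --- locate a hole, diamond, or intra-block positive edge, push at most three of its vertices into $S$, drop $k$ by one via Lemma~\ref{beta} --- does match the structure of the cited proof, and you are right that the per-configuration accounting there is where the work lies.
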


%
	
	There are two kinds of reduction rules applied by Crowston et al.~\cite{crowston2013maximum} to an instance $(G,k)$ in order to obtain the kernel of size $O(k^3)$ for {\sc{Signed Max Cut ATLB}}: one-way reduction rules and two-way reduction rules. In a \emph{two-way reduction rule}, the instance $(G',k')$ produced by the reduction rule is {\sl equivalent} to $(G,k)$ (that is, $(G,k)$ is a {\sc{Yes}}-instance if and only if
$(G',k')$ is a {\sc{Yes}}-instance), so these rules can be safely applied to any instance in order to obtain {\sf{FPT}}-algorithms or kernels, as long as the parameter $k$ does not increase, which is always the case in the rules defined by Crowston et al.~\cite{crowston2013maximum}. A two-way reduction rule is {\it{valid}} if it transforms {\sc{Yes}}-instances into {\sc{Yes}}-instances and {\sc{No}}-instances into {\sc{No}}-instances, that is, if it works as it should.

 On the other hand, in a \emph{one-way reduction rule}, the instance $(G',k')$ produced by the reduction rule does not need to be equivalent to $(G,k)$, but only needs to satisfy that if $(G',k')$ is a {\sc{Yes}}-instance, then $(G,k)$ is a {\sc{Yes}}-instance as well. The usefulness of such rules relies on the fact that if after the application of some two-way or one-way reduction rules we obtain an instance $(G',k')$ with $k' \leq 0$, we can safely conclude that the original instance $(G,k)$ is a {\sc{Yes}}-instance. This fact will be used in the linear kernel provided in Section~\ref{sec:linear-kernel}. A one-way reduction rule is {\it{safe}} if it does not transform a {\sc{No}}-instance into a {\sc{Yes}}-instance.

\section{A quadratic kernel on $(r,\ell)$-graphs}
\label{sec:quadratic-kernel}

In this section we show that {\sc{Signed Max Cut ATLB}} admits a quadratic kernel on $(r,\ell)$-graphs for any fixed integers $r,\ell \geq 0$. For this we strongly use the results and the techniques provided by Crowston et al.~\cite{crowston2013maximum}. The kernelization algorithm applies to the instance $(G,k)$ the algorithm given by Proposition~\ref{coro4.1}. This algorithm uses the two-way reduction rules Rule~8, Rule~9, Rule~10, and Rule~11 defined by Crowston et al.~\cite{crowston2013maximum}. Namely, when these rules cannot be applied anymore, either one can directly conclude that $(G,k)$ is a  \textsc{Yes}-instance or, using that the graph cannot be reduced anymore, one can prove that its size is bounded by a function of $k$ only. It is at this point when we exploit the structure of $(r,\ell)$-graphs in order to improve the cubic kernel on general graphs and obtain a quadratic kernel on $(r,\ell)$-graphs. We would like to stress that we do {\sl not} need to obtain algorithmically any particular partition of $V(G)$ into $r$ independent sets and $\ell$ cliques, as such a partition will only be used for the {\sl analysis}, and not by the kernelization algorithm.

This section is organized as follows. For completeness, in Subsection~\ref{sec:rules} we state all the reduction rules and the results from Crowston et al.~\cite{crowston2013maximum} that we need for our purposes. For the sake of providing more intuition on the ideas behind our kernel and because it is simpler, we first present in Subsection~\ref{ap:quadratic-split} the quadratic kernel on split graphs (that is, for $r=\ell= 1$), and we describe in Subsection~\ref{subsec:quadratic-general} its generalization to arbitrary values of $r$ and $\ell$. The main difference is that, in the case of split graphs, it can be seen that we do not have to worry about path vertices in $G-S$.

\subsection{Reduction rules and known results}
\label{sec:rules}

Before stating the two-way reduction rules used in the kernelization algorithm, we first need to introduce some definitions from~\cite{crowston2013maximum}. Let $S$ be a set of vertices as in Proposition~\ref{coro4.1}. For a block $C$ in $G-S$, let $C_{\text{int}}=\{x\in V(C)  \mid N_{G-S}(x)$ $\subseteq V(C)\}$ be the {\it{interior}} of $C$, and let $C_{\text{ext}}=V(C)\backslash C_{\text{int}}$ be the {\it{exterior}} of $C$.
If a block $C$ has only two vertices and these vertices belong to $C_{\text{ext}}$, then $C$ is called a {\it{path block}}. A vertex that belongs only to path blocks is called a {\it{path vertex}}. A block $C$ in $G-S$ is called a {\it{leaf block}} if $|C_{\text{ext}}|\leq 1$. We denote by $N^+$ (resp. $N^-$) the set of neighbors (of a vertex or of a vertex set) adjacent via a positive (resp. negative) edge.


In the following rules, the main idea is that, if some simple local conditions are satisfied, then certain vertices or edges can be deleted from the graph without changing the answer to the problem. Most of these conditions concern blocks or connected components of $G-S$ and their neighborhoods in $S$.

\vspace{.3cm}

\noindent	{\bf{Rule 8.}}{ \it{Let $C$ be a block in $G-S$. If there exists $X\subseteq C_{\text{int}}$ such that $|X| > \frac{|V(C)| + |N_G(X) \cap S|}{2} \geq 1$, $N_G^+(x) \cap S = N_G^+(X) \cap S$ and $N_G^-(x) \cap S = N_G^-(X) \cap S$ for all $x \in X$, then delete two arbitrary vertices $x_1,x_2 \in X$ and set $k' = k$.}}

\vspace{.3cm}

\noindent	{\bf{Rule 9.}}{ \it{Let $C$ be a block in $G-S$. If $|V(C)|$ is even and there exists $X\subseteq C_{\text{int}}$ such that $|X|=\frac{|V(C)|}{2}$ and $N_G(X)\cap S=\emptyset$, then delete a vertex $x\in X$, and set $k'=k-1$.}}

\vspace{.3cm}
	
\noindent	{\bf{Rule 10.}}{ \it{Let $C$ be a block in $G-S$ with vertex set $\left\{x,y,u\right\}$, such that $N_G(u)=\left\{x,y\right\}$. If the edge $\{x,y\}$ is a bridge in $G-u$, delete $C$, add a new vertex $z$, positive edges $\left\{\{z,v\} \mid v\in N^+_{G-u}(\left\{x,y\right\})\right\}$, negative edges $\{    \{z,v\} \mid    v \in N^{-}_{G-u}(\{x,y\})   \}$, and set $k'=k$. Otherwise, delete $u$ and the edge $\{x,y\}$ and set $k'=k-1$.}}

\vspace{.3cm}

An illustration of the application of Rule~10 can be found in Fig.~\ref{fig:Rule10}.

\begin{figure}[h!]
\begin{center} 
\includegraphics[width=.45\textwidth]{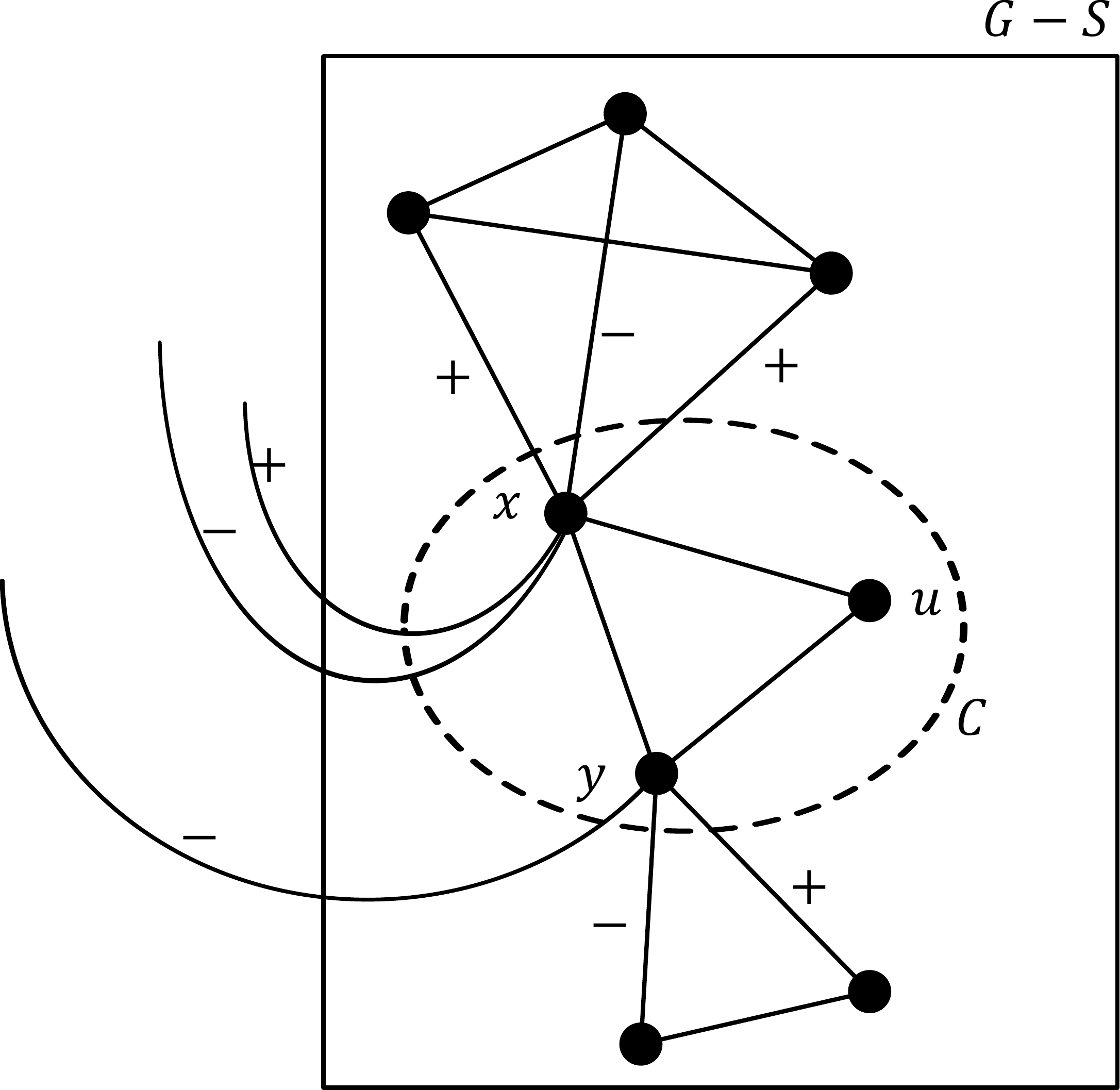}\\
\vspace{.4cm}
  \includegraphics[width=.45\textwidth]{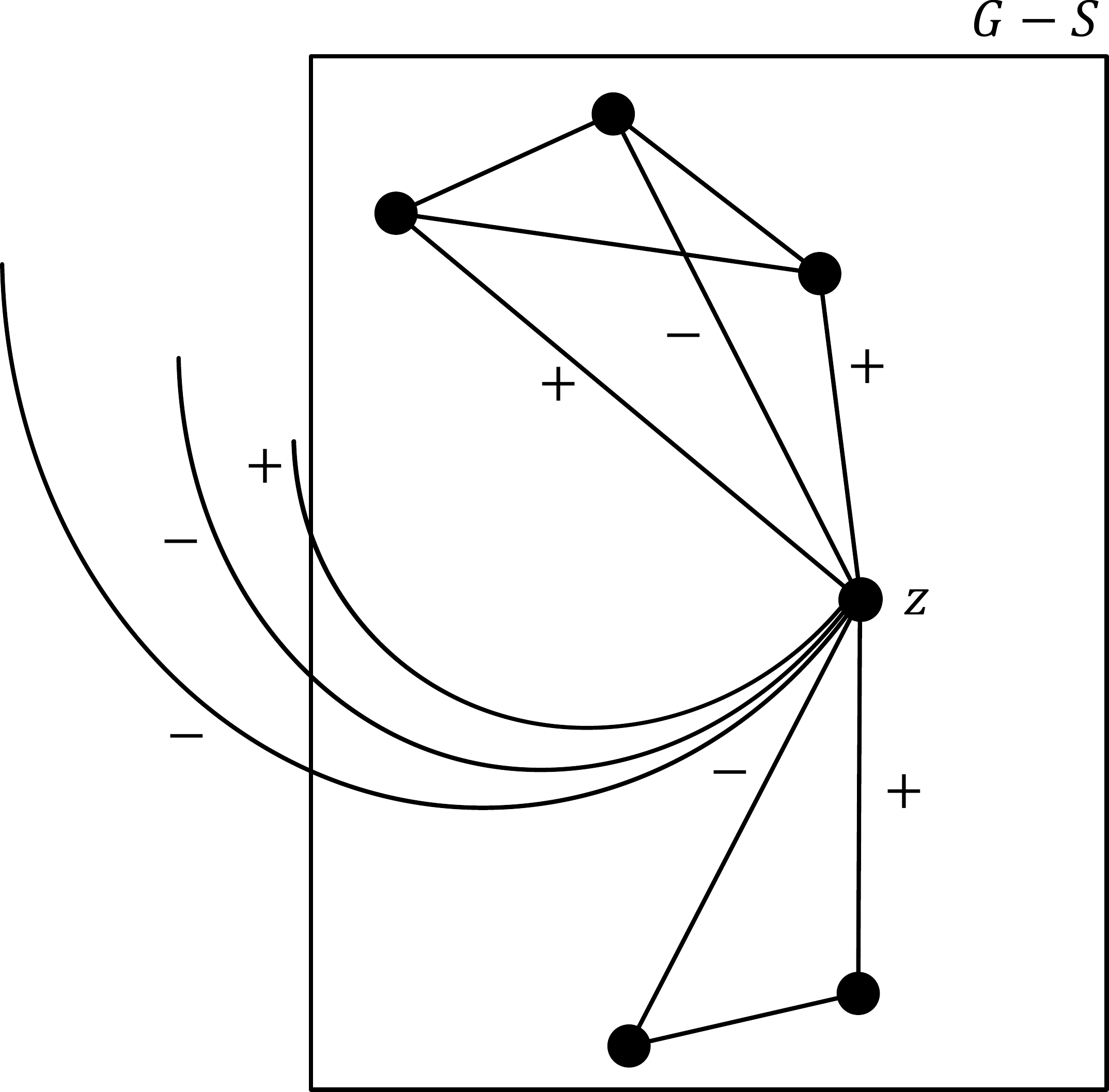}\hspace{1cm}
\includegraphics[width=.45\textwidth]{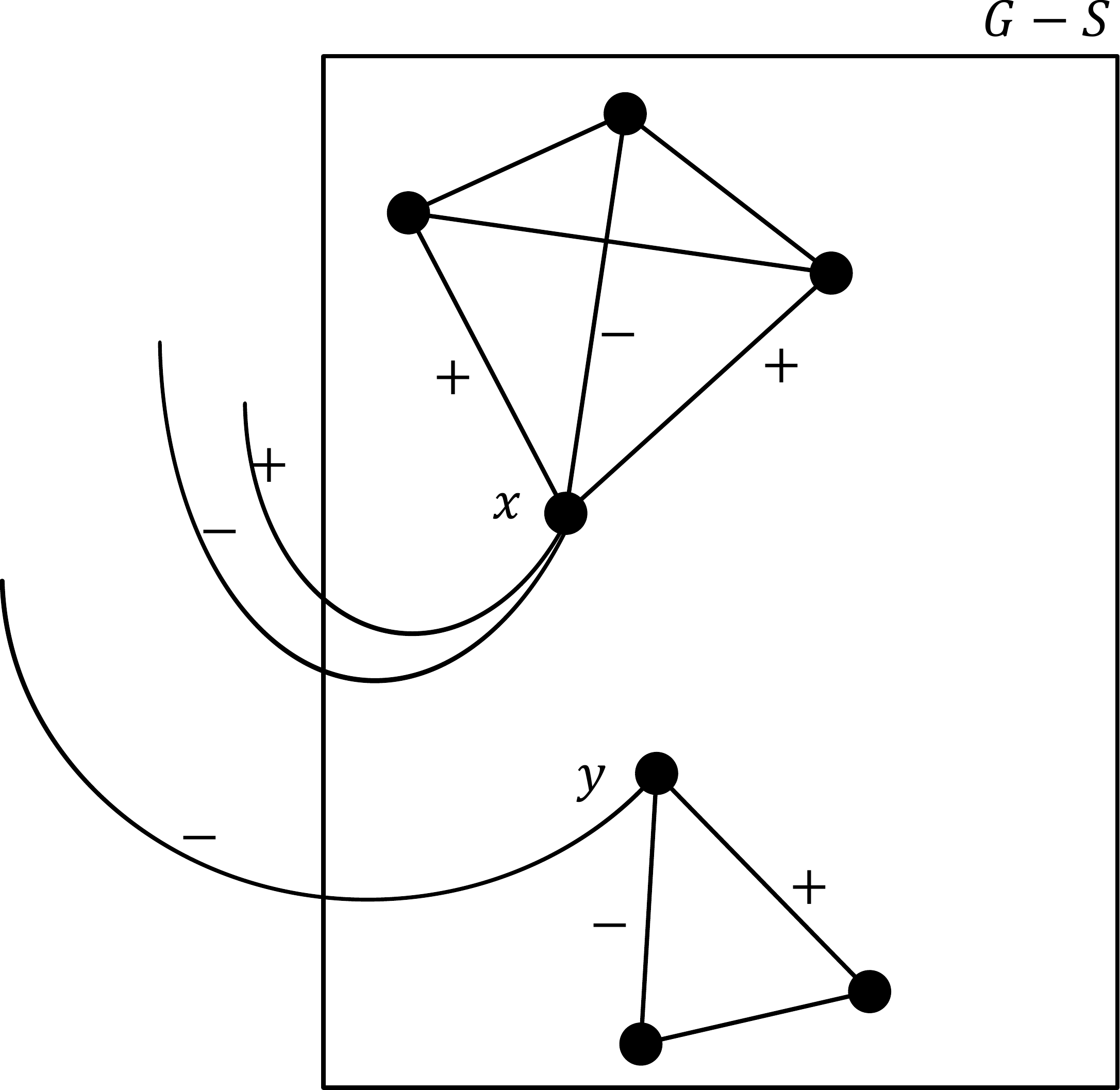}
\caption{Illustration of the application of Rule 10: a block $C$ satisfying the conditions (up), and the two possible resulting graphs after applying the rule (down).} \label{fig:Rule10} 
\end{center}
\end{figure}

\vspace{.3cm}
The \textsc{Max Cut with Weighted Vertices} problem is defined as follows. We are given a graph $G$ with weight functions $w_1:V(G) \to \mathds{N}_0$ and $w_2:V(G) \to \mathds{N}_0$, and an integer $t \in \mathds{N}$, and the question is whether there exists an assignment $f: V(G) \to \{1,2\}$ such that $\sum_{\{x,y\} \in E} |f(x) - f(y)| + \sum_{f(x) =1}w_1(x) + \sum_{f(x) =2}w_12(x) \geq t$.
\vspace{.3cm}

\noindent	{\bf{Rule 11.}}{ \it{Let $T$ be a connected component of $G-S$ only adjacent to a vertex $s \in S$. Form a \textsc{Max Cut with Weighted Vertices} instance on $T$ by defining $w_1(x) = 1$ if $x \in N_G^+(s) \cap T$ ($w_1(x)= 0$ otherwise) and $w_2(y) = 1$ if $y \in N_G^-(s) \cap T$ ($w_2(y)= 0$ otherwise). Let $\beta(G[V(T) \cup \{s\}]) = {\sf pt}(G[V(T) \cup \{s\}]) + \frac{p}{4}$. Then delete $T$ and set $k' = k-p$.}}

\vspace{.3cm}


By~\cite[Lemma 9]{crowston2012max}, the value of $p$ in Rule~11 can be found in polynomial time.

\vspace{.3cm}

Crowston et al.~\cite{crowston2013maximum} proved that the two-way reduction Rules~8-11 are all valid. As mentioned in~\cite{crowston2013maximum}, since it can be easily verified  that none of these  reduction rules increases the number of positive edges, our proof also implies  a kernel of size $O(k^2)$ for {\sc{Signed Max Cut ATLB}} on $(r,\ell)$-graphs.


We now state the results from Crowston et al.~\cite{crowston2013maximum} that we will use in the proofs of Theorem~\ref{thm:quadratic-split} and Theorem~\ref{thm:quadratic-general}.

\begin{lemma}[Lemma 9 in Crowston et al.~\cite{crowston2013maximum}]\label{Crow-Lemma-9}
Let $T$ be a connected component of $G -S$. Then for every leaf block $C$ of $T$, $N_G(C_{\text{int}}) \cap S \neq \emptyset$. Furthermore, if $|N_G(S) \cap V(T)|=1$, then $T$ consists of a single vertex.
\end{lemma}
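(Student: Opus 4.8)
I would prove this by contradiction, using crucially that $S$ is extracted from the \emph{reduced} instance, i.e., from a graph on which none of the two-way Rules~8--11 applies anymore; in fact only Rules~8 and~9 will be needed. Throughout I may assume $|V(G)| \geq 2$, since a one-vertex graph is a trivial instance. The point of the proof is that if a leaf block's interior were not attached to $S$, then a large chunk of pairwise ``twin'' vertices with empty neighbourhood in $S$ would sit inside that block, which is exactly the situation Rules~8 and~9 are designed to eliminate.

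For the first claim, suppose $C$ is a leaf block of $T$ with $N_G(C_{\text{int}}) \cap S = \emptyset$. From $|C_{\text{ext}}| \leq 1$ I get $|C_{\text{int}}| \geq |V(C)| - 1$, and I distinguish cases on $|V(C)|$. If $|V(C)| = 1$, its unique vertex lies in $C_{\text{int}}$ and is isolated in $G-S$; since $G$ is connected with at least two vertices it has a neighbour in $S$, contradicting $N_G(C_{\text{int}}) \cap S = \emptyset$. If $|V(C)| \geq 2$ is even, I pick $X \subseteq C_{\text{int}}$ with $|X| = |V(C)|/2$ (possible because $|C_{\text{int}}| \geq |V(C)| - 1 \geq |V(C)|/2$); then $N_G(X)\cap S = \emptyset$, so Rule~9 applies, a contradiction. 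If $|V(C)| \geq 3$ is odd, I take $X = C_{\text{int}}$: then $N_G(X) \cap S = \emptyset$, the sign conditions of Rule~8 hold trivially (both sides are empty), and $|X| \geq |V(C)| - 1 > |V(C)|/2 = \tfrac{|V(C)| + |N_G(X)\cap S|}{2} \geq 1$, so Rule~8 applies, again a contradiction.

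For the second claim, assume $N_G(S)\cap V(T) = \{v\}$ and, for contradiction, $|V(T)| \geq 2$. If $T$ had at least two blocks, its block-cut tree would be a tree with at least two leaves, each of which is a block, yielding two distinct leaf blocks $C_1, C_2$ of $T$; by the first claim each $(C_i)_{\text{int}}$ contains a vertex with a neighbour in $S$, so by uniqueness of $v$ we would have $v \in (C_1)_{\text{int}} \cap (C_2)_{\text{int}}$. But $v \in (C_1)_{\text{int}}$ means $N_{G-S}(v) \subseteq V(C_1)$, which forces $v$ to lie in no block other than $C_1$ (any other block through $v$ would contain $v$ together with one of its $G-S$-neighbours, hence would share two vertices with $C_1$), contradicting $v \in (C_2)_{\text{int}}$. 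Hence $T$ is a single block $C = T$ with $C_{\text{int}} = V(T)$. Setting $X = V(T) \setminus \{v\}$ gives $N_G(X)\cap S = \emptyset$ and $|X| = |V(T)| - 1 \geq 1$; if $|V(T)|$ is even then a subset of $X$ of size $|V(T)|/2$ triggers Rule~9, and if $|V(T)| \geq 3$ is odd then $|X| = |V(T)| - 1 > |V(T)|/2 \geq 1$ triggers Rule~8. In each case we reach a contradiction, so $|V(T)| = 1$.

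The argument is conceptually easy; the only points requiring care are bookkeeping ones. One must check that the chosen set $X$ really lies inside $C_{\text{int}}$ (this uses $|C_{\text{int}}| \geq |V(C)| - 1$ for leaf blocks), that the strict inequality demanded by Rule~8 holds precisely when $|V(C)|$ is odd and at least $3$, and one needs the small structural fact that a vertex whose $G-S$-neighbourhood is contained in a single block belongs to that block only. Isolating the tiny cases $|V(C)| \in \{1,2\}$ keeps the case analysis clean, and getting this case split exactly right — rather than any genuine conceptual difficulty — is the main thing to watch.
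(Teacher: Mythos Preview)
The paper does not supply its own proof of this lemma: it is quoted verbatim as Lemma~9 of Crowston et al.\ and used as a black box. So there is nothing in the present paper to compare your argument against line by line.

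That said, your proof is correct and is essentially the intended one. The lemma is meant to hold for an instance that is already reduced under Rules~8--11, and your argument uses exactly (and only) Rules~8 and~9, which is the right idea: a leaf block whose interior has no neighbours in $S$ furnishes a set $X\subseteq C_{\text{int}}$ with $N_G(X)\cap S=\emptyset$ that triggers Rule~9 when $|V(C)|$ is even and Rule~8 when $|V(C)|\geq 3$ is odd; the singleton case is handled by connectivity of $G$. For the second part, the block--cut tree observation and the fact that an interior vertex cannot lie in two distinct blocks (else they would share two vertices) correctly reduce to the single-block case, where the same Rule~8/Rule~9 dichotomy applies with $X=V(T)\setminus\{v\}$. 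All the inequalities you check (in particular $|C_{\text{int}}|\geq |V(C)|-1$ for leaf blocks, and $|V(C)|-1>|V(C)|/2$ for $|V(C)|\geq 3$) are exactly what is needed, and the twin conditions in Rule~8 are vacuously satisfied since $N_G(X)\cap S=\emptyset$.
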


\begin{lemma}[Corollary 5 in Crowston et al.~\cite{crowston2013maximum}]\label{Crow-Coro-5}
If $\sum_{C \in \mathcal{B}} |N_G(C_{\text{int}}) \cap S| \geq |S| (2|S| - 3 + 2k) + 1$, the instance is a \textsc{Yes}-instance. Otherwise, $\sum_{C \in \mathcal{B}} |N_G(C_{\text{int}}) \cap S| \leq 3k(8k-3)$.
\end{lemma}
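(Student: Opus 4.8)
The ``Otherwise'' clause is immediate: by Proposition~\ref{coro4.1} we have $|S| \le 3k$, hence $2|S| - 3 + 2k \le 8k - 3$, so $\sum_{C \in \mathcal{B}} |N_G(C_{\text{int}}) \cap S| \le |S|\,(2|S| - 3 + 2k) \le 3k(8k - 3)$ whenever the threshold is not reached. The content is thus the first assertion, which I would prove working — as is the case when this statement is applied — with an instance $(G,k)$ that is reduced under Rules~8--11. Two features of such instances enter the argument: by Rule~11, every connected component of $G - S$ has at least two neighbours in $S$; and by Rules~8 and~9, no block of $G - S$ has too many interior vertices with a common signed neighbourhood in $S$ (in particular, too many with no neighbour in $S$).

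\emph{Pigeonhole.} Rewrite the sum as $\sum_{s \in S} |\{C \in \mathcal{B} : s \in N_G(C_{\text{int}})\}|$. If it exceeds $|S|\,(2|S| - 3 + 2k)$, then some $v \in S$ is an interior neighbour of at least $2|S| - 2 + 2k = 2(|S| - 1 + k)$ blocks of $G - S$; call this family $\mathcal{C}$, so $C_{\text{int}} \cap N_G(v) \ne \emptyset$ for every $C \in \mathcal{C}$.

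\emph{Producing the surplus.} I would build a balanced subgraph with at least ${\sf pt}(G) + \frac{k}{4}$ edges by applying Lemma~\ref{beta} to a bipartition $(U, W)$ of $V(G)$ that peels off $v$ together with one chosen interior neighbour $a_C \in C_{\text{int}} \cap N_G(v)$ per block $C$ of a suitable sub-family $\mathcal{C}' \subseteq \mathcal{C}$. Interior vertices lie in a single block, so the $a_C$ are pairwise distinct and pairwise non-adjacent; hence $G[U]$ with $U = \{v\} \cup \{a_C : C \in \mathcal{C}'\}$ is a star and $\beta(G[U]) = |\mathcal{C}'|$. Feeding this, the bound $\beta(G[W]) \ge {\sf pt}(G[W])$, connectedness of $G$, and the component estimate of Lemma~\ref{beta} into a short computation gives
\[
  \beta(G) \ \ge\ {\sf pt}(G) + \frac{|\mathcal{C}'| - c'}{4},
\]
where $c'$ is the number of connected components of $G - U$. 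Since $|\mathcal{C}| \ge 2(|S| - 1 + k)$, it suffices to select $\mathcal{C}'$ of size at least $|S| - 1 + k$ (roughly half of $\mathcal{C}$) and the vertices $a_C$ so that $c' \le |S| - 1$, for then $|\mathcal{C}'| - c' \ge k$.

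\emph{The main obstacle} is precisely this estimate on $c'$, and the choice of $\mathcal{C}'$ it requires. Deleting an interior vertex of a block never disconnects $G - S$, so the vertices $a_C$ do not fragment it: each component of $G - U$ either meets $S \setminus \{v\}$ — at most $|S| - 1$ of those — or is a component $T$ of $G - S$ all of whose attachments to $S \setminus \{v\}$ have been destroyed by the deletions. Since, by Rule~11, every such $T$ begins with at least two neighbours in $S$, the task reduces to showing that, component by component, $\mathcal{C}'$ and the $a_C$ can be chosen so as to keep some surviving vertex witnessing an attachment to $S \setminus \{v\}$ — discarding a block from $\mathcal{C}'$ only when this is impossible — and that this forces discarding at most half of $\mathcal{C}$. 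This is where the clique-forest/block-tree structure of $G - S$, Lemma~\ref{Crow-Lemma-9} (every leaf block is attached to $S$), and the Rule~8/Rule~9 bounds on the interior of a block all come in. Making this selection explicit and verifying that no ``stranded'' components remain is the crux of the proof; the pigeonhole step and the Lemma~\ref{beta} computation are routine by comparison.
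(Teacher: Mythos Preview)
This lemma is not proved in the present paper: it is quoted verbatim as Corollary~5 of Crowston et al.~\cite{crowston2013maximum} and used as a black box in the kernel analysis. There is therefore no ``paper's own proof'' to compare against here.

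That said, your sketch follows the natural line one would expect from the Crowston et al.\ framework: the ``Otherwise'' clause is indeed immediate from $|S|\le 3k$, the pigeonhole step isolating a vertex $v\in S$ with many interior-adjacent blocks is correct, and the star construction feeding into Lemma~\ref{beta} gives exactly $\beta(G)\ge {\sf pt}(G)+\frac{|\mathcal{C}'|-c'}{4}$ as you write. You have also correctly located the only non-routine point, namely controlling the number $c'$ of components of $G-U$ by a judicious choice of $\mathcal{C}'$ and the $a_C$, and you name the right ingredients (block-tree structure, Lemma~\ref{Crow-Lemma-9}, and the irreducibility under Rules~8--11). Your claim that deleting one interior vertex per block cannot fragment a component of $G-S$ is sound in the clique-forest setting. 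What remains genuinely unwritten in your proposal is the selection argument showing that at most half of $\mathcal{C}$ needs to be sacrificed to keep every surviving piece attached to $S\setminus\{v\}$; you flag this as the crux but do not carry it out, so as stated this is an outline rather than a proof. If you intend to supply a full argument, that combinatorial selection is the part requiring care.
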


\begin{lemma}[Corollary 6 in Crowston et al.~\cite{crowston2013maximum}]\label{Crow-Coro-6}
 $G-S$ contains at most $6k(8k-3)$ non-path blocks and $24k^2 (8k-3)$
path vertices.
\end{lemma}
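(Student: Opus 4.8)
The plan is to prove the two bounds separately: the estimate on non-path blocks is purely combinatorial, using Lemma~\ref{Crow-Coro-5} as a ``budget'', Lemma~\ref{Crow-Lemma-9} for leaf blocks, the non-applicability of Rules~8--10 for small blocks, and the block-cut tree; the estimate on path vertices combines a clean formula for $\beta$ of a clique-forest with Lemma~\ref{beta}.

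\textbf{Non-path blocks.} I would classify the non-path blocks $C$ of $G-S$ by $|C_{\mathrm{ext}}|$. If $|C_{\mathrm{ext}}|\le 2$, I claim $N_G(C_{\mathrm{int}})\cap S\neq\emptyset$: for $|C_{\mathrm{ext}}|\le 1$ this is exactly Lemma~\ref{Crow-Lemma-9}; for $|C_{\mathrm{ext}}|=2$ the block being non-path forces $|V(C)|\ge 3$, hence $C_{\mathrm{int}}\neq\emptyset$, and if $N_G(C_{\mathrm{int}})\cap S=\emptyset$ then taking $X=C_{\mathrm{int}}$ makes Rule~8 applicable when $|V(C)|\ge 5$, Rule~9 applicable when $|V(C)|=4$, and Rule~10 applicable when $|V(C)|=3$ (then $C_{\mathrm{int}}=\{u\}$ with $N_G(u)=C_{\mathrm{ext}}$), contradicting exhaustiveness. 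So every such block contributes at least $1$ to $\sum_{C}|N_G(C_{\mathrm{int}})\cap S|$, and by Lemma~\ref{Crow-Coro-5} there are at most $3k(8k-3)$ of them. For blocks with $|C_{\mathrm{ext}}|\ge 3$ (all of which are non-path) I would use the block-cut tree $\mathcal{T}$ of each component of $G-S$: a block node has $\mathcal{T}$-degree $|C_{\mathrm{ext}}|$ (its exterior vertices are exactly its cut vertices, since $G-S$ is a clique-forest) and every leaf of $\mathcal{T}$ is a block with $|C_{\mathrm{ext}}|\le 1$ (a leaf block), so from $\sum_v(\deg_{\mathcal{T}}(v)-2)=-2$ one gets that the number of $\mathcal{T}$-nodes of degree $\ge 3$, in particular the number of blocks with $|C_{\mathrm{ext}}|\ge 3$, is at most the number of leaf blocks; summing over components and invoking Lemma~\ref{Crow-Lemma-9} and Lemma~\ref{Crow-Coro-5} again bounds this by $3k(8k-3)$. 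Adding the two estimates gives at most $6k(8k-3)$ non-path blocks.

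\textbf{Path vertices.} The key observation I would establish is that for a clique-forest $F$ with all edges negative one has $\beta(F)=\sum_{B}\lfloor|B|^2/4\rfloor$ (a maximum cut can be built by propagating an optimal $2$-colouring along the block-cut tree, as pre-fixing one vertex of a clique does not lower its max-cut) while ${\sf pt}(F)=\sum_{B}\frac{|B|^2-1}{4}$; hence $\beta(F)-{\sf pt}(F)=\frac14\cdot(\text{number of blocks of }F\text{ of even order})$. Every path block has order $2$, and a double count of the incidences between path vertices (each of degree $\ge 2$ in $G-S$, hence lying in $\ge 2$ path blocks) and path blocks (each containing $\le 2$ path vertices) gives that the number of path blocks is at least the number of path vertices; therefore $k_1:=4(\beta(G-S)-{\sf pt}(G-S))$ is at least the number of path vertices. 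Applying Lemma~\ref{beta} with $U=V(G)\setminus S$ and $W=S$, and writing $c_1$ for the number of components of $G-S$ and $c_2\le|S|\le 3k$ for that of $G[S]$, a \textsc{No}-instance must satisfy $k_1-(c_1+c_2-1)<k$ (using $\beta(G[S])\ge{\sf pt}(G[S])$), so the number of path vertices is at most $k_1<k+c_1+c_2-1$. Finally each component of $G-S$ contains a leaf block (its block-cut tree has a leaf), so $c_1$ is at most the number of leaf blocks, which by Lemma~\ref{Crow-Lemma-9} and Lemma~\ref{Crow-Coro-5} is at most $3k(8k-3)$; plugging in gives that the number of path vertices is at most $3k(8k-3)+4k-1$, well below $24k^2(8k-3)$. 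The degenerate cases $S=\emptyset$ or $V(G)=S$ are handled directly.

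\textbf{Main obstacle.} The non-path part is routine once one trusts Lemma~\ref{Crow-Lemma-9}, Lemma~\ref{Crow-Coro-5} and the block-cut tree identity; the care there is only in checking that no reduction rule is overlooked in the ``$|C_{\mathrm{ext}}|=2$'' case. The delicate point is the path-vertex count: path vertices may have degree larger than $2$ in $G-S$, so they need not literally form a disjoint union of threads, and — more importantly — one must control the ``loss'' terms $c_1+c_2$ in Lemma~\ref{beta}; it is precisely the polynomial bound on $c_1$ (via leaf blocks, hence via Lemma~\ref{Crow-Coro-5}, hence via $|S|\le 3k$) that lets the surplus coming from the even blocks of $G-S$ certify a \textsc{Yes}-instance. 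In fact the argument above yields the stronger bound $O(k^2)$ on the number of path vertices, so the stated $24k^2(8k-3)$ follows a fortiori.
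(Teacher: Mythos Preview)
The paper does not give a proof of this lemma: it is quoted verbatim as Corollary~6 of Crowston et al.~\cite{crowston2013maximum} and used as a black box. Your argument, on the other hand, is a self-contained reconstruction, and it is correct.

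For the non-path blocks, your case split on $|C_{\mathrm{ext}}|$ together with the block--cut tree identity $\sum_v(\deg_{\mathcal T}(v)-2)=-2$ is exactly the standard route; the check that Rules~8--10 cover all sizes $|V(C)|\ge 3$ when $|C_{\mathrm{ext}}|=2$ and $N_G(C_{\mathrm{int}})\cap S=\emptyset$ is done carefully. The only point to make explicit is that leaf blocks are themselves non-path (they have $|C_{\mathrm{ext}}|\le 1<2$), so they legitimately appear as summands in Lemma~\ref{Crow-Coro-5}; you use this implicitly and it is fine.

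For the path vertices your approach is in fact \emph{stronger} than the cited bound and is close in spirit to what the present paper does later, in the proof of Theorem~\ref{thm:quadratic-general}. There the authors also bypass the cubic bound of Lemma~\ref{Crow-Coro-6} by applying Lemma~\ref{beta}, but with the partition $U=\mathcal P$, $W=V\setminus\mathcal P$ and the lower bound on $\beta(G[\mathcal P])-{\sf pt}(G[\mathcal P])$ coming from Claim~\ref{claim:2-3}. You instead take $U=V\setminus S$, $W=S$ and use the exact formula $\beta(F)-{\sf pt}(F)=\tfrac14\cdot|\{\text{even blocks of }F\}|$ for a negative clique-forest $F$, combined with the double count ``path blocks $\ge$ path vertices''. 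Both routes yield an $O(k^2)$ bound on path vertices (yours gives $3k(8k-3)+4k-1$), so the stated $24k^2(8k-3)$ follows a fortiori; your version has the mild advantage that the surplus is read off directly from the block sizes rather than via an auxiliary inequality on trees, at the cost of needing the identity $\sum_B(|B|-1)=n-t$ to verify ${\sf pt}(F)=\sum_B(|B|^2-1)/4$.
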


\begin{lemma}[Corollary 7 in Crowston et al.~\cite{crowston2013maximum}]\label{Crow-Coro-7}
$G - S$ contains at most $12k(8k - 3)$ vertices in the exteriors of
non-path blocks.
\end{lemma}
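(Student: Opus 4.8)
The plan is to reduce the statement to an elementary degree count in the block--cut forest of $G-S$, feeding in only Proposition~\ref{coroker} (so that every block of $G-S$ is a clique) and the bound on the number of non-path blocks from Lemma~\ref{Crow-Coro-6}. Let $\mathcal{T}$ be the block--cut forest of $G-S$: its nodes are the blocks of $G-S$ together with the cut vertices of $G-S$, and a block $C$ is adjacent in $\mathcal{T}$ to a cut vertex $v$ precisely when $v\in V(C)$.

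First I would identify the exterior vertices. Since $G-S$ is a clique-forest, a vertex $x\in V(C)$ lies in $C_{\text{ext}}$ if and only if $x$ is a cut vertex of $G-S$: if $x$ has a neighbour outside $V(C)$, that neighbour lies in a second block containing $x$, so $x$ is a cut vertex; conversely a cut vertex $x\in V(C)$ lies in another block $C'$, which is a clique on at least two vertices meeting $V(C)$ only in $x$, so $x$ has a neighbour outside $V(C)$. Hence $\deg_{\mathcal{T}}(C)=|C_{\text{ext}}|$ for every block $C$, and in particular $\sum_{C}|C_{\text{ext}}|$ equals the number of edges of $\mathcal{T}$.

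Next I would use two simple facts about $\mathcal{T}$: every cut vertex of $G-S$ has degree at least $2$ in $G-S$, hence lies in at least two blocks and so has degree at least $2$ in $\mathcal{T}$; and every path block has exactly two vertices, both exterior, so degree exactly $2$ in $\mathcal{T}$. Working inside one connected component of $G-S$ (a tree in $\mathcal{T}$), counting edges via the block side gives $\sum_{\text{blocks }C}|C_{\text{ext}}| = (\#\text{blocks})+(\#\text{cut vertices})-1$, while counting degrees at cut vertices gives $2\cdot(\#\text{cut vertices}) \leq \sum_{\text{blocks }C}|C_{\text{ext}}|$; combining these yields $\sum_{\text{blocks }C}|C_{\text{ext}}| \leq 2\cdot(\#\text{blocks})$. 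Subtracting the contribution of exactly $2$ coming from each path block, $\sum_{\text{non-path blocks }C}|C_{\text{ext}}| \leq 2\cdot(\#\text{non-path blocks})$. Summing over all components of $G-S$ and plugging in Lemma~\ref{Crow-Coro-6}, which bounds the number of non-path blocks by $6k(8k-3)$, gives $\sum_{\text{non-path }C}|C_{\text{ext}}| \leq 12k(8k-3)$; since the left-hand side is at least the number of distinct vertices belonging to the exterior of some non-path block, this proves the claim.

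The step I expect to need the most care is the first one: correctly characterising the exterior vertices as exactly the cut vertices of $G-S$, and making sure the borderline cases---leaf blocks, $K_2$ path blocks, and isolated-vertex components---are consistently classified, so that the identity $\deg_{\mathcal{T}}(C)=|C_{\text{ext}}|$ genuinely holds. After that, nothing about the signs or about the particular reduction rules is needed; it is purely a counting argument on a forest, built from Proposition~\ref{coroker} and Lemma~\ref{Crow-Coro-6}.
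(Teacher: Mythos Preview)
The paper does not prove this lemma at all: it is quoted verbatim as Corollary~7 of Crowston et al.~\cite{crowston2013maximum} and used as a black box, so there is no in-paper argument to compare your proposal against. That said, your argument is correct. The identification $C_{\text{ext}}=\{\text{cut vertices of }G-S\text{ lying in }C\}$ holds exactly as you describe (in both directions one uses that distinct blocks meet in at most one vertex), and the two counts in the block--cut tree of a component combine to give $\sum_C |C_{\text{ext}}|\le 2\cdot(\#\text{blocks})-2$; subtracting the exact contribution of $2$ per path block and invoking Lemma~\ref{Crow-Coro-6} yields the bound. Your caution about isolated-vertex components and single-block components is warranted but harmless: in those cases both sides of the relevant identities are zero.
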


\begin{lemma}[Lemma 14 in Crowston et al.~\cite{crowston2013maximum}]\label{Crow-Lemma-14}
For a block $C$, if $|V (C)| \geq 2|C_{\text{ext}} |+|N_G (C_{\text{int}}) \cap S|(2|S|+2k +1)$,
then $(G,k)$ is a \textsc{Yes}-instance. Otherwise, $|V (C)| \leq 2|C_{\text{ext}}|+|N_G (C_{\text{int}} ) \cap S|(8k+ 1)$.
\end{lemma}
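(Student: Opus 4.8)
The plan is to prove the two assertions separately. The ``Otherwise'' bound is immediate: if the hypothesis of the first assertion fails then $|V(C)| < 2|C_{\text{ext}}| + |N_G(C_{\text{int}})\cap S|\,(2|S| + 2k + 1)$, and since $|S| \le 3k$ by Proposition~\ref{coroker} we have $2|S| + 2k + 1 \le 6k+2k+1 = 8k + 1$, so, as $|N_G(C_{\text{int}})\cap S| \ge 0$, we get $|V(C)| \le 2|C_{\text{ext}}| + |N_G(C_{\text{int}})\cap S|\,(8k+1)$, as claimed. This is the only point where the precise size bound $|S| \le 3k$ enters.

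For the first assertion write $D = C_{\text{int}}$, $N = N_G(D)\cap S$, $n = |N|$, and assume $|V(C)| \ge 2|C_{\text{ext}}| + n\,(2|S| + 2k + 1)$, i.e.\ $2|D| - |V(C)| \ge n\,(2|S|+2k+1)$. Since $G-S$ is a clique-forest without positive edges, $G[D]$ is a negative clique, and every edge of $G$ with exactly one endpoint in $D$ joins $D$ to $C_{\text{ext}}$ (and is then negative) or to $N \subseteq S$. The main tool is Lemma~\ref{beta}: I would exhibit a set $U \subseteq V(G)$ with $G[U]$ connected, with large surplus $\beta(G[U]) - {\sf pt}(G[U])$, and with $G - U$ having few connected components; applying Lemma~\ref{beta} to the partition $(U, V(G)\setminus U)$ then yields $\beta(G) \ge {\sf pt}(G) + k/4$, hence a YES-instance. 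The elementary building block is the observation that a negative clique on $2\mu$ vertices together with one extra vertex $s$ joined by positive edges to $\mu$ of them and by negative edges to the other $\mu$ has surplus at least $\mu$ (use the bipartition that cuts the clique in half with $s$ and its $\mu$ positive neighbours on one side and the $\mu$ negative neighbours on the other): thus the vertices of $S$ with a \emph{balanced} attachment to $D$ are exactly the useful ones.

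The argument then splits according to whether such a configuration occurs inside $C$. If some $s \in N$ sends at least $\mu$ positive and at least $\mu$ negative edges into $D$, with $\mu$ large enough in terms of $k$ and of the number of new components created, take $U$ to be $s$ together with $2\mu$ suitably chosen vertices of $D$: this is the YES case, and it does not use that the instance is reduced. Otherwise every $s\in N$ attaches to $D$ in an unbalanced way, and I would bound $|D|$ directly (we may assume the instance is reduced with respect to Rules~8--11, since otherwise one first applies them): partitioning $D$ by its signature towards $S$, the fact that Rule~8 is no longer applicable forces the empty-signature class to have size at most $|V(C)|/2$ and, more generally, bounds the size of each signature class in terms of $|V(C)|$ and $|S| \le 3k$; since every vertex of $D$ with a nonempty signature lies in $N_G(N)$, summing these bounds yields $|V(C)| \le 2|C_{\text{ext}}| + n\,(2|S|+2k+1)$, contradicting the hypothesis. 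Leaf blocks and components adjacent to a single vertex of $S$ must be dealt with separately, invoking Lemma~\ref{Crow-Lemma-9} and Rule~11.

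The step I expect to be the main obstacle is the bookkeeping in the YES case: bounding the number of components of $G - U$ (a vertex of $S$ may have large degree, so deleting it can break $G$ into many pieces) and calibrating the threshold on $\mu$ so that it matches the stated constant $2|S| + 2k + 1$ rather than a weaker one. The precise dividing line between the ``balanced'' and ``unbalanced'' regimes, together with the invariants needed to justify assuming the instance reduced, is the other delicate point.
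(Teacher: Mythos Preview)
The paper does not prove this lemma: it is quoted verbatim as Lemma~14 of Crowston et al.~\cite{crowston2013maximum} and used as a black box in Subsection~\ref{sec:rules}. There is therefore no proof in the present paper to compare your proposal against.

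That said, a brief comment on the content of your sketch. The ``Otherwise'' part is correct and complete as written. For the first assertion, your outline---distinguish a vertex $s\in N$ with many positive \emph{and} many negative edges into $D$ (build a high-surplus set $U$ and invoke Lemma~\ref{beta}) from the case where every $s\in N$ is unbalanced (use irreducibility under Rule~8 to bound the signature classes in $D$)---matches the general strategy of the original Crowston et al.\ argument. You have, however, correctly identified the genuine gap in your write-up: in the YES case you must control the number of components of $G-U$, and your proposal does not yet do this. Removing $U=\{s\}\cup X$ with $X\subseteq D$ can disconnect $G$ badly through $s$, so one needs either to choose $U$ more carefully or to absorb the extra components into the surplus; the constant $2|S|+2k+1$ arises precisely from this trade-off, and your sketch does not show how. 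Until that bookkeeping is supplied, the proposal is a plausible plan rather than a proof.
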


\subsection{A quadratic kernel on split graphs}
\label{ap:quadratic-split}

In this subsection we present a quadratic kernel for {\sc{Signed Max Cut ATLB}} on split graphs, which contains the main ideas of the kernel described in Theorem~\ref{thm:quadratic-general} for arbitrary $(r,\ell)$-graphs. The main simplification is that in the case of split graphs, we do not have to worry about path vertices in $G-S$.

The proof is based on exploiting the structure given by the set $S$ of Proposition~\ref{coro4.1}. We will strongly use the fact that, as  the input graph is a split graph and this property is hereditary, both $G[S]$ and $G-S$ are split graphs as well.

     \begin{theorem}\label{thm:quadratic-split}
	{\sc{Signed Max Cut ATLB}} on split graphs admits a kernel with $O(k^2)$ vertices.
	\end{theorem}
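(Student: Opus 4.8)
The goal is to show that after exhaustively applying Rules~8--11 to a split graph instance $(G,k)$, either we can conclude that $(G,k)$ is a \textsc{Yes}-instance, or $|V(G)| = O(k^2)$. First I would invoke Proposition~\ref{coroker} to obtain the set $S$ with $|S| \le 3k$ such that $G-S$ is a clique-forest without positive edges. Since split graphs are hereditary, $G-S$ is a split graph which is also a clique-forest; a split clique-forest is extremely restricted---it can have at most one block of size $\ge 3$ (any two such blocks plus the independent vertices would force a $2K_2$ or a larger induced structure incompatible with being split), and all the other blocks are path blocks (edges). The key structural observation, which explains the remark in Subsection~\ref{ap:quadratic-split}, is that a split graph has no long induced paths, so in fact $G-S$ has \emph{no path vertices at all} beyond a bounded number: a path block consists of two exterior vertices, and a chain of path blocks would be an induced path $P_4$ or longer, which split graphs forbid. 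Hence every vertex of $G-S$ lies in a non-path block.

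**Main steps.** I would then bound the number of vertices of $G-S$ by summing over its blocks. By Lemma~\ref{Crow-Coro-5}, either $(G,k)$ is a \textsc{Yes}-instance, or $\sum_{C \in \mathcal{B}} |N_G(C_{\text{int}}) \cap S| \le 3k(8k-3) = O(k^2)$. By Lemma~\ref{Crow-Coro-6} there are $O(k^2)$ non-path blocks, and by Lemma~\ref{Crow-Coro-7} the exteriors of all non-path blocks together contain $O(k^2)$ vertices. For the \emph{interiors} I would apply Lemma~\ref{Crow-Lemma-14}: for each block $C$, either $(G,k)$ is a \textsc{Yes}-instance, or $|V(C)| \le 2|C_{\text{ext}}| + |N_G(C_{\text{int}}) \cap S|(8k+1)$. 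Summing this inequality over all (non-path) blocks $C$, the term $\sum_C 2|C_{\text{ext}}|$ is $O(k^2)$ by Lemma~\ref{Crow-Coro-7}, and the term $(8k+1)\sum_C |N_G(C_{\text{int}}) \cap S|$ is $(8k+1) \cdot O(k^2)$ --- wait, that would be cubic. Here is where the split structure must be exploited more carefully: since $G-S$ is a split clique-forest, there is essentially \emph{one} large block, so the sum $\sum_C |V(C)|$ is not much more than $|V(C^*)|$ for the unique big block $C^*$, whose interior neighborhood into $S$ is a single set of size $\le |S| = O(k)$, giving $|V(C^*)| = O(k) \cdot O(k) = O(k^2)$; all remaining blocks are edges and there are $O(k^2)$ of them by Lemma~\ref{Crow-Coro-6}. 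Adding $|S| \le 3k$, we get $|V(G)| = O(k^2)$.

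**The main obstacle.** The delicate point is precisely the last one: in a general graph, naively summing the Lemma~\ref{Crow-Lemma-14} bound over all blocks loses a factor of $k$, which is why Crowston et al.\ only get a cubic kernel. The whole content of the improvement is the observation that in a split graph $G-S$ has a trivial block structure (one big clique plus pendant edges plus isolated vertices), so we never pay the $|N_G(C_{\text{int}}) \cap S|(8k+1)$ cost more than $O(1)$ times with a genuinely large $|N_G(C_{\text{int}}) \cap S|$; the $O(k^2)$ path/edge blocks each contribute $O(1)$ vertices. I would need to argue carefully that Rules~8--11 leave $G-S$ in this split clique-forest shape (which is automatic since they only delete vertices/edges and the one new vertex introduced by Rule~10 preserves the split property of $G-S$), and that the "big block" is unique; then the counting closes. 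I expect formalizing "$G-S$ is a split clique-forest hence has $\le 1$ block of size $\ge 3$" and handling the isolated-vertex components of $G-S$ (bounded via Lemma~\ref{Crow-Lemma-9} together with Lemma~\ref{Crow-Coro-5}, since a singleton component still needs a neighbor in $S$) to be the parts requiring the most care.
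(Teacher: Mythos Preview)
Your approach is essentially the same as the paper's: obtain $S$ via Proposition~\ref{coroker}, exploit that a split clique-forest has at most one ``big'' block (size $\geq 3$) with all other blocks being edges or isolated vertices, bound the big block $C_0$ via Lemma~\ref{Crow-Lemma-14} using $|N_G({C_0}_{\text{int}})\cap S|\leq |S|\leq 3k$, and bound the remaining $O(k^2)$ edge-blocks by $O(1)$ each. The paper does exactly this, writing $|V(G)|\leq |S|+|V(C_0)|+\sum_{C\in\mathcal{B}_*}|V(C)|$ and arriving at $168k^2-48k$.

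One correction: your justification for the absence of path vertices is not quite right. You write that ``a chain of path blocks would be an induced path $P_4$ or longer, which split graphs forbid,'' but $P_4$ \emph{is} a split graph (clique $\{b,c\}$, independent set $\{a,d\}$). What split graphs forbid is $2K_2$ (equivalently, they are $P_5$-free). The paper argues directly from the partition $G-S=K^2\cup I^2$: a path block can only arise when $|K^2|=2$, $I^2_u=\emptyset$, and both vertices of $K^2$ have pendant neighbours in $I^2_1$; even then those two vertices lie in leaf blocks as well, so they are not path vertices. Your $2K_2$-based reasoning for ``at most one block of size $\geq 3$'' is correct and is morally what underlies the paper's structural analysis, but you should replace the $P_4$ claim by either the $2K_2$/$P_5$ argument or the paper's explicit case analysis.
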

     \begin{proof}
     Let $(G,k)$ be the given instance, where $G=(V,E)$ is a split graph. Applying the polynomial-time algorithm of Proposition~\ref{coro4.1}, we can either conclude that $(G,k)$ is a \textsc{Yes}-instance, or we obtain a set $S \subseteq V(G)$ such that $G-S$ is a clique-forest without positive edges. Let $S=K^1\cup I^1$ and $G-S=K^2\cup I^2$ be such that $K^1 \cup K^2$ is a clique and $I^1 \cup I^2$ is an independent set. We partition $I^2$ into three subsets: $I^2_0$ containing only isolated vertices in $G-S$, $I^2_1$ with all vertices of degree one in $G-S$, and $I^2_u$ containing all vertices $u \notin I^2_0 \cup I^2_1$ such that $N_{K^2}(u)=K^2$; see Fig.~\ref{fig:quadratic-split} for an illustration. Note that $I^2_u$ can contain at most one vertex. Indeed, if $|K^2| \leq 1$, then the set $I^2_u$ is empty by definition. Otherwise, if $|K^2| \geq 2$, if $I^2_u$ contains two distinct vertices $u_1$ and $u_2$, then $G-S$ is not a clique-forest, contradicting Proposition~\ref{coro4.1}.

\begin{figure}[h]
\begin{center} 
 \includegraphics[width=.45\textwidth]{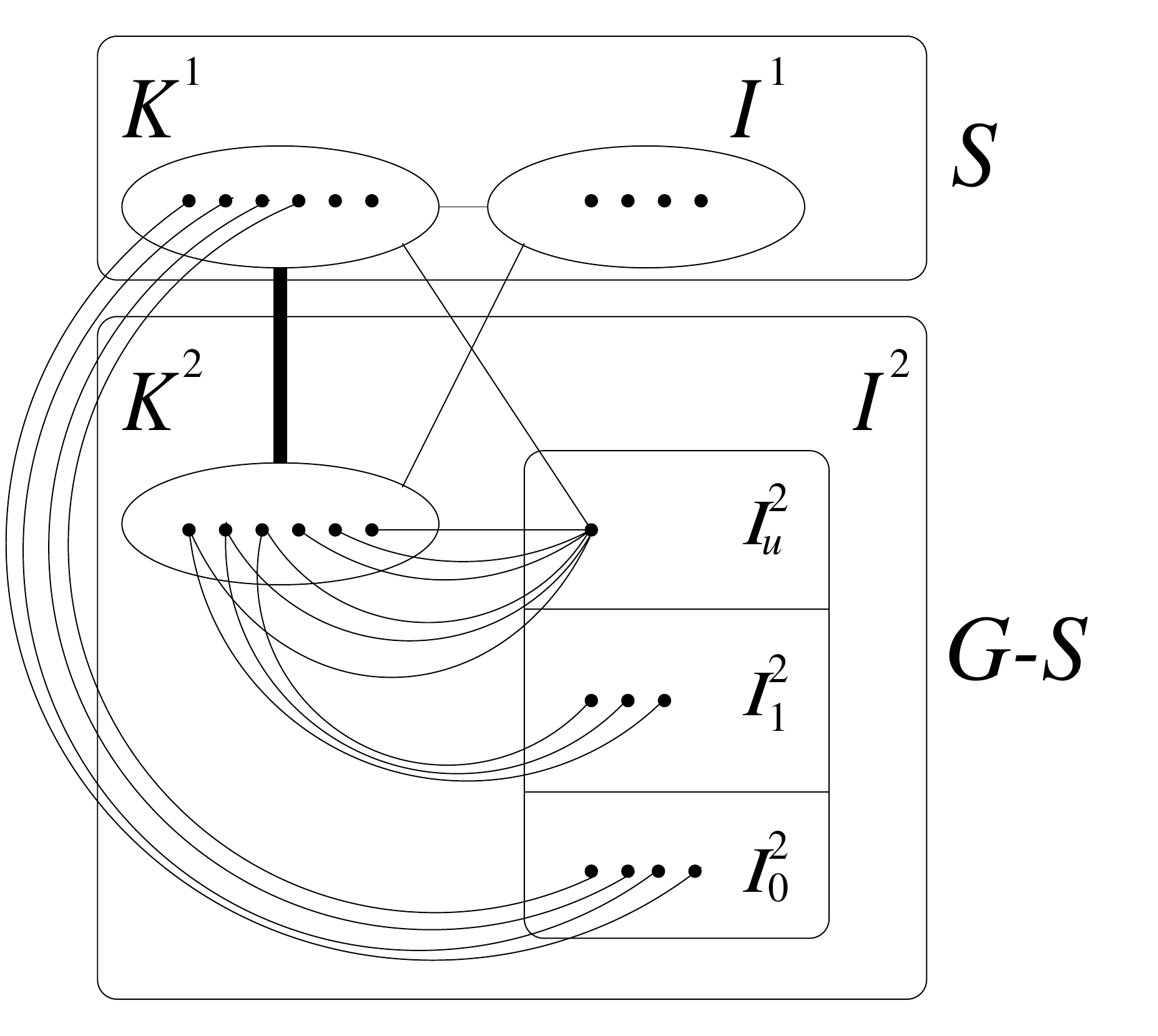}
\caption{Illustration of the proof of Theorem~\ref{thm:quadratic-split}.} \label{fig:quadratic-split} 
\end{center}
\end{figure}

     These sets indeed define a partition of $I^2$, as  if there exists a vertex $v$ in $I^2$ such that $1<d_{K^2}(v)<|V(K^2)|$, we could find a block in $G-S$ that is not a clique. On the other hand, due to the structure of the clique-forest, two maximal cliques can intersect only in one vertex.

     Observe that $G-S$ has at most one path block. Indeed, it is easily seen that the only possible case in which a path block exists in $G-S$ is when $K^2=K_2$ (the complete graph with two vertices), $I^2_u=\emptyset$, and each vertex of $K_2$ has at least one adjacent vertex in $I^2_1$. Hence, $G-S$ has no path vertices since in $G-S$ there is only one possible path block (namely, $K_2$) adjacent to at least two leaf blocks; see Fig.~\ref{fig:path-block} for an illustration.

     \begin{figure}[h!]
\begin{center} 
 \includegraphics[width=.35\textwidth]{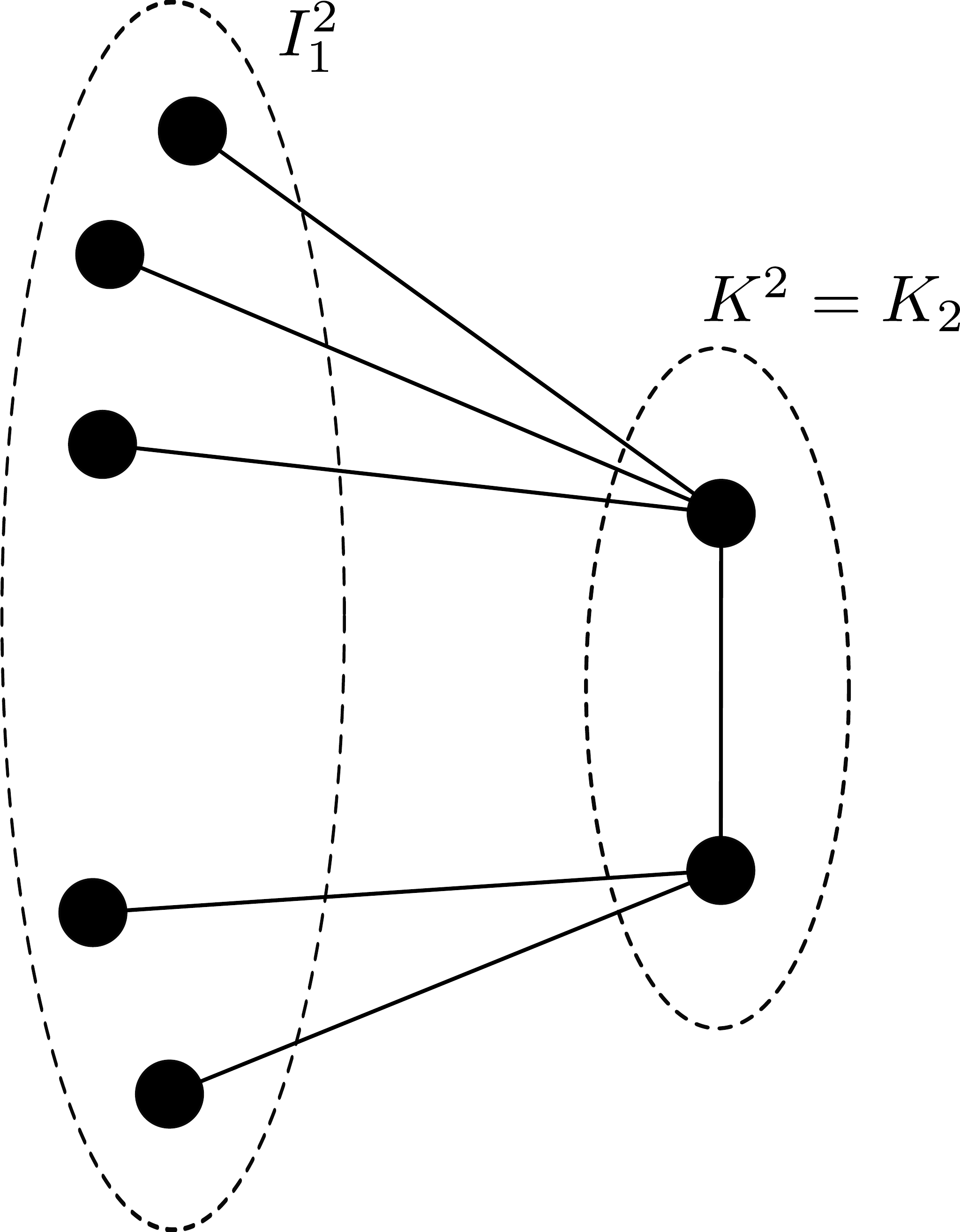}
\caption{Illustration of the unique possible path block in $G-S$.} \label{fig:path-block} 
\end{center}
\end{figure}

\newpage 

     For each vertex $x$ in the set $I^2_1$ there is a leaf block $C^x$ with $|V(C^x)|=2$ and $C^x_{\text{int}}=\left\{x\right\}$. So, Rule 9 could be successively applied to each vertex in $I^2_1$ with no neighbor in $S$, deleting all these vertices and reducing the parameter $k$ accordingly.	We could also delete all the vertices of $I^2_0$ with no neighbor in $S$, because they are isolated vertices and make no contribution to the cut. So in this case we set $k'=k$.

     We note that Rule 10 could be applied only in two cases. In the first case $K^1=\emptyset$ and $K^2$ is a triangle, and in the second case, $K^1=\emptyset$, $K^2=\left\{x,y\right\}$, and $I^2_u=\left\{u\right\}$.

	Using Lemma~\ref{Crow-Lemma-9}, it can be easily seen that after applying Rule 9, the remaining vertices of $I^2_1$ are adjacent to some vertex in $S$. Indeed, otherwise  Rule~9 could be applied again, deleting the vertices $x\in C^x_{\text{int}}$ of $I^2_1$ with $N_G(C^x_{\text{int}})\cap S=\emptyset$, contradicting the hypothesis that the graph $G$ is reduced under all two-way reduction rules; see Fig.~\ref{fig:blocosfolha} for an illustration.

     \begin{figure}[h!]
\begin{center} 
 \includegraphics[width=.38\textwidth]{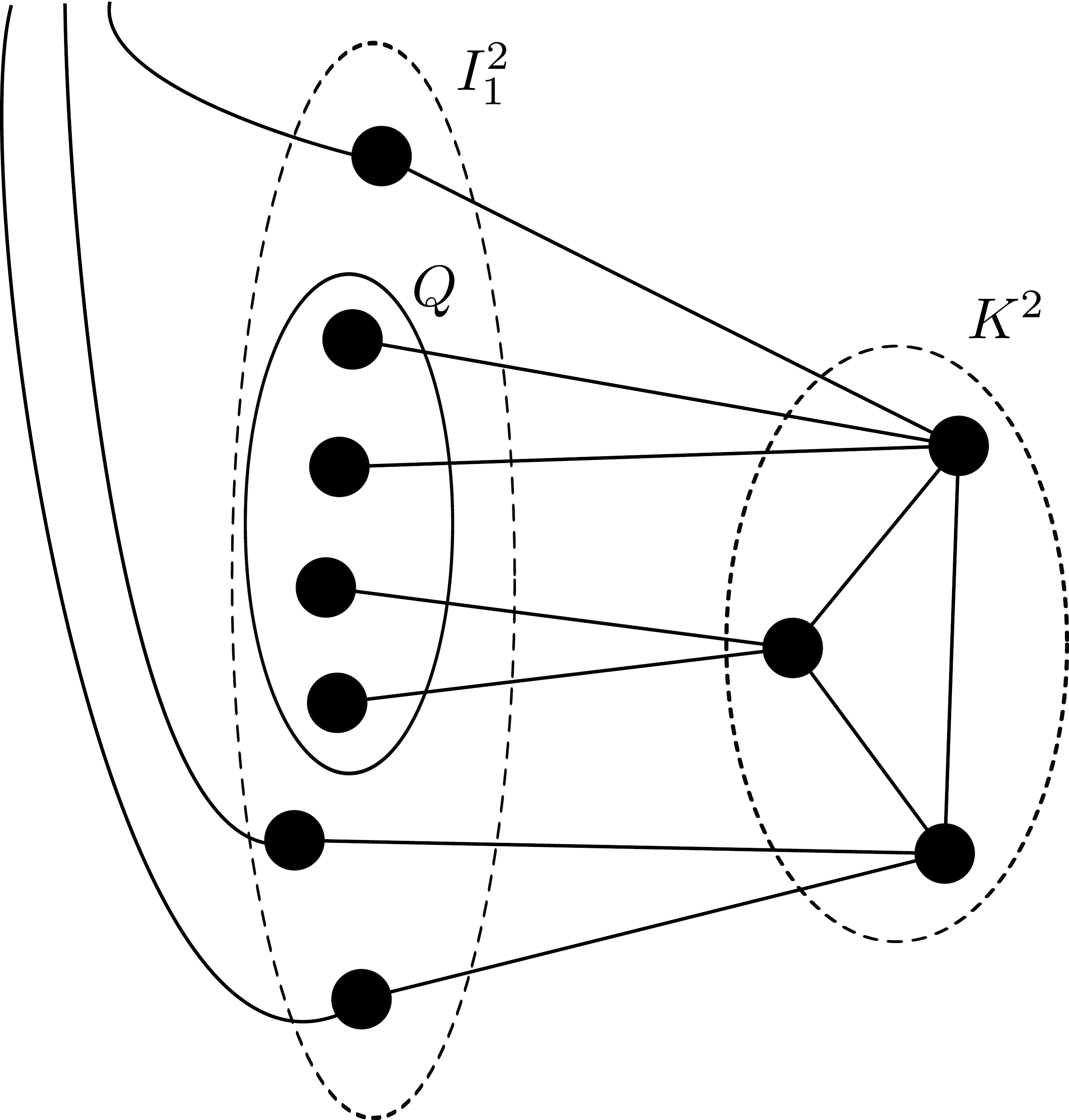}
\caption{The set $Q \subseteq I_1^2$, if it were non-empty, would be deleted by Rule 9.} \label{fig:blocosfolha} 
\end{center}
\end{figure}

	Let $\mathcal{B}$ be the set of non-path blocks of $G-S$. As discussed before,  in a split graph $G$  all blocks of $G-S$, but possibly one, are non-path blocks. Since if there is some path block, we have that $|K^2|= |K_2 |=2$, we may assume in what follows that all blocks of $G-S$ are in $\mathcal{B}$.


By Lemma~\ref{Crow-Coro-5}, either we can identify that $G$ is a {\sc{Yes}}-instance or it follows that $\sum_{C\in\mathcal{B}}|N_{G}(C_{\text{int}})\cap S|\leq 3k(8k-3)$. Using Lemma~\ref{Crow-Lemma-14} it follows that for all blocks $C$ in $G-S$, we have
$$
|V(C)|\leq 2|C_{\text{ext}}|+|N_G(C_{\text{int}})\cap S| \cdot (8k+1).
$$
	Then $|V(G)|\leq |S|+\sum_{C\in\mathcal{B}}{|V(C)|}$ since we may assume that are no path blocks in $G-S$. As $G$ is a split graph, $G-S$ is also a split graph and our previous analysis shows that $\mathcal{B}$ is composed by one big block $C_0$ defined by $K^2$, leaf blocks that are edges which have at least one neighbor in $S$, and possibly isolated vertices. Let $\mathcal{B_*}$ be the set of leaf blocks in $\mathcal{B}$. We can rewrite the previous inequality as
$$
|V(G)|\leq |S|+|V(C_0)|+\sum_{C\in\mathcal{B_*}}{|V(C)|}.
$$

\newpage
In particular, for the big block $C_0$ we have $|V(C_0)|\leq 2|{C_0}_{\text{ext}}|+|N_G({C_0}_{\text{int}})\cap S|(8k+1)\leq 2|{C_0}_{\text{ext}}|+|S|(8k+1)$. And for all leaf blocks  $C \in \mathcal{B_*}$ we have $|V(C)|\leq 2$.
	Then
$$
\sum_{C\in\mathcal{B_*}}{|V(C)|}\leq 2|\mathcal{B_*}|\leq 2\sum_{C\in\mathcal{B_*}}{|N_G(C_{\text{int}})\cap S|}\leq 2\cdot 3k(8k-3)
$$
as a consequence of Lemma~\ref{Crow-Coro-5}, since every block $C \in \mathcal{B_*}$ has a vertex $v \in C_{\text{int}}$ such that $N_S(v)\neq\emptyset$.
	Finally, the number of vertices in $G$ can be bounded as follows:
\begin{eqnarray*}
|V(G)| & \leq & |S|+|S|(8k+1)+6k(8k-3)+2|{C_0}_{\text{ext}}| \\
& \leq & 3k+3k(8k+1)+6k(8k-3)+2|\bigcup_{C\in\mathcal{B}}C_{\text{ext}}|\\
& \leq &  3k+3k(8k+1)+6k(8k-3)+12k(8k-3) \\
& = &168k^2 - 48k = O(k^2),
\end{eqnarray*}
where the last inequality is due to Lemma~\ref{Crow-Coro-7}.
\end{proof}

\subsection{Generalization to arbitrary $(r,\ell)$-graphs}
\label{subsec:quadratic-general}

In this subsection we show how the ideas of the previous subsection can be generalized to arbitrary $(r,\ell)$-graphs. We follow the same strategy as in the proof of Theorem~\ref{thm:quadratic-split}, but we need a number of extra arguments, mostly to deal with the path vertices in $G-S$.

Namely, we will again exploit the structure given by the set $S$ of Proposition~\ref{coro4.1}, and the fact that, as  the input graph is an $(r,\ell)$-graph and this property is hereditary, both $G[S]$ and $G-S$ are $(r,\ell)$-graphs as well. In order to bound the size of the set of path vertices in $G-S$, which we call $\mathcal{P}$ in the proof, we will need three technical claims.

\begin{theorem}\label{thm:quadratic-general}
For every two integers $r,\ell \geq 0$, \textsc{Signed Max Cut ATLB} on $(r,\ell)$-graphs admits a kernel with $O((r+\ell)k^2)$ vertices.
\end{theorem}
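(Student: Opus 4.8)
The plan is to follow the same kernelization as in the split-graph case (Theorem~\ref{thm:quadratic-split}): apply the algorithm of Proposition~\ref{coro4.1} to obtain either a \textsc{Yes}-answer or a set $S$ with $|S| \le 3k$ such that $G-S$ is a clique-forest without positive edges, and then apply Rules~8--11 exhaustively. Since $(r,\ell)$-graphs are a hereditary class, both $G[S]$ and $G-S$ are $(r,\ell)$-graphs. After the reduction is complete, the goal is to bound $|V(G)|$ by $O((r+\ell)k^2)$ using the known results of Crowston et al. together with the extra structure imposed by the $(r,\ell)$-partition of $G-S$. I would split $V(G-S)$ into: the path vertices $\mathcal{P}$; the vertices lying in exteriors of non-path blocks; and the interiors of non-path blocks. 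Lemma~\ref{Crow-Coro-7} already bounds the exterior vertices of non-path blocks by $12k(8k-3)$, independently of $r,\ell$. Lemma~\ref{Crow-Coro-5} bounds $\sum_{C\in\mathcal{B}}|N_G(C_{\text{int}})\cap S|$ by $3k(8k-3)$, and feeding this into Lemma~\ref{Crow-Lemma-14} bounds $\sum_{C\in\mathcal{B}}|V(C)|$, hence the total interior-plus-exterior contribution of non-path blocks, by $O(k^2)$ — again with no dependence on $r$ or $\ell$. So the entire $(r+\ell)$ factor must come from bounding $|\mathcal{P}|$.

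The heart of the argument, and what the three promised technical claims presumably establish, is that $|\mathcal{P}| = O((r+\ell)k^2)$. The intuition: in the split case $G-S$ has essentially no path vertices because a clique-forest that is split is almost trivial; for general $(r,\ell)$-graphs, $G-S$ is a clique-forest which additionally admits a partition into $r$ independent sets and $\ell$ cliques, and I would use this to control long paths of path blocks. Concretely, a path block is a $K_2$ both of whose vertices are exterior; a maximal ``path'' in $G-S$ is a sequence of such blocks glued at cut vertices, and Lemma~\ref{Crow-Lemma-9} says a leaf block's interior must touch $S$, so every maximal path has its two ends anchored at vertices whose interior-neighbourhood in $S$ is nonempty (or the component is a single vertex). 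The number of such maximal paths is therefore $O(\sum_{C}|N_G(C_{\text{int}})\cap S|) = O(k^2)$ by Lemma~\ref{Crow-Coro-5}. The remaining task is to bound the \emph{length} of each maximal path by $O(r+\ell)$: this is where the $(r,\ell)$-structure enters. A path of path blocks $v_0 v_1 \cdots v_t$ in $G-S$ consists of vertices that are consecutively adjacent; among any $r+1$ of them lying pairwise non-adjacent (which, along a path, many do) one would violate the independent-set count, and among any $\ell+1$ forming a clique one would violate the clique count — but along a long induced path one essentially gets large independent sets, so $t = O(r+\ell)$. Combined with a reduction rule (the new one introduced in Section~\ref{s2}, or Rule~8/Rule~9 applied to twins along the path) forcing the paths to not contain long reducible stretches, one obtains $|\mathcal{P}| = O((r+\ell)\cdot k^2)$.

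Putting the pieces together: $|V(G)| \le |S| + |\mathcal{P}| + (\text{exteriors of non-path blocks}) + (\text{interiors of non-path blocks}) \le 3k + O((r+\ell)k^2) + O(k^2) + O(k^2) = O((r+\ell)k^2)$, and since none of Rules~8--11 creates positive edges (as remarked in Subsection~\ref{sec:rules}), the same bound holds for the number of edges, giving a kernel with $O((r+\ell)k^2)$ vertices. I would present the three claims first (path vertices lie on few maximal paths; each maximal path has bounded length in terms of $r+\ell$; hence $|\mathcal{P}|$ is small), then assemble the final count exactly as in the split proof but adding the $\mathcal{P}$ term.

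The main obstacle I expect is precisely the length bound on a maximal path of path blocks: one has to argue carefully that a long such path forces either a large independent set or a large clique inside $G-S$ (contradicting membership in the $(r,\ell)$ class), while simultaneously handling the fact that consecutive path vertices may have identical neighbourhoods in $S$ (so that Rule~8 or the new rule of Section~\ref{s2} collapses them). Getting the interaction between ``reduced under all rules'' and ``$(r,\ell)$-structure'' right — so that after reduction no maximal path is longer than some explicit function of $r+\ell$ — is the delicate part; the rest is bookkeeping that parallels Theorem~\ref{thm:quadratic-split} almost verbatim.
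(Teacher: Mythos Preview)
Your proposal has the roles of the two vertex classes reversed, and this creates a genuine gap.

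First, your claim that ``feeding Lemma~\ref{Crow-Coro-5} into Lemma~\ref{Crow-Lemma-14} bounds $\sum_{C\in\mathcal{B}}|V(C)|$ by $O(k^2)$'' is wrong. Summing Lemma~\ref{Crow-Lemma-14} over $\mathcal{B}$ gives
\[
\sum_{C\in\mathcal{B}}|V(C)| \;\le\; 2\sum_{C\in\mathcal{B}}|C_{\text{ext}}| \;+\; (8k+1)\sum_{C\in\mathcal{B}}|N_G(C_{\text{int}})\cap S|,
\]
and the second term is $(8k+1)\cdot 3k(8k-3) = O(k^3)$, not $O(k^2)$. This cubic term is exactly why Crowston et al.\ obtain only an $O(k^3)$ kernel on general graphs; without further structure you cannot do better here.

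Second, your plan to bound $|\mathcal{P}|$ by showing each maximal path of path blocks has length $O(r+\ell)$ fails outright: an $(r,\ell)$-graph with $r\ge 2$ can contain arbitrarily long induced paths (any path is bipartite, hence a $(2,0)$-graph). Rules~8--10 require a nonempty $C_{\text{int}}$ or a $3$-vertex block, so they never touch path blocks, and Rule~11 does not shorten paths either; the ``reduced under all rules'' hypothesis therefore gives you no leverage on path length.

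The paper does the opposite of what you propose. The $(r,\ell)$-structure is used to bound the \emph{non-path} blocks: split $\mathcal{B}$ into the at most $\ell$ blocks containing two or more clique vertices (each bounded individually via Lemma~\ref{Crow-Lemma-14} with $|N_G(C_{\text{int}})\cap S|\le |S|$, giving $O(\ell k^2)$ total) and the remaining blocks, which can contain at most $r+1$ vertices each (a clique meeting at most one vertex of each $I^j$ and at most one clique vertex), giving $O(rk^2)$ total via Lemma~\ref{Crow-Coro-6}. The path vertices $\mathcal{P}$ are then bounded by $O(k^2)$ \emph{independently of $r$ and $\ell$}: since $G[\mathcal{P}]$ is a forest it is balanced, so $\beta(G[\mathcal{P}]) - {\sf pt}(G[\mathcal{P}])$ is roughly $|E(G[\mathcal{P}])|/2 - |\mathcal{P}|/4$, which is $\Omega(|\mathcal{P}|)$ once components of size $\le 2$ (at most $O(k^2)$ of them) are set aside. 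Plugging this into Lemma~\ref{beta} with $U=\mathcal{P}$ and using that $G[\mathcal{P}]$ and $G-\mathcal{P}$ have $O(k^2)$ components, one gets that $|\mathcal{P}|$ large forces a \textsc{Yes}-instance.
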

\begin{proof}
Let $G$ be an $(r,\ell)$-graph for two integers $r,\ell \geq 0$. Denote by $K^1,K^2,\ldots,K^\ell$ the cliques and $I^1,I^2,\ldots,I^r$ the independent sets of $G$.  We stress again that we just use the partition of $G$ for the analysis, but we do not need to know exactly this partition.  By Proposition~\ref{coro4.1}, we can find a set $S$ with at most $3k$ vertices such that the subgraph $G-S$ is a forest of cliques without positive edges. For an $(r,\ell)$-graph, we have $S=K^1_1\cup\cdots\cup K^\ell_1\cup I^1_1\cup\cdots\cup I^r_1$ and $G-S=K^1_2\cup\cdots\cup K^\ell_2\cup I^1_2\cup\cdots\cup I^r_2$ with $K^i = K^i_1 \cup K^i_2$ and $I^j = I^j_1 \cup I^j_2$ for $1 \leq i \leq \ell$ and $1 \leq j \leq r$. In contrast to split graphs (see  Subsection~\ref{ap:quadratic-split}), we may have path vertices in $G-S$.

In order to bound the size of $V(G)$, we bound separately the total size of non-path blocks and the number of path vertices (that is, the only vertices that do not belong to non-path blocks) in $G-S$. Denote again by $\mathcal{B}$ the set of non-path blocks, by $\mathcal{B}_b$ the set of non-path blocks containing at least two vertices  from the set $K^1 \cup K^2 \cup \cdots \cup K^\ell$, and $\mathcal{B}_s=\mathcal{B}\backslash\mathcal{B}_b$.
	The structure of clique-forests implies that $|\mathcal{B}_b|\leq \ell$, and by Lemma~\ref{Crow-Lemma-14}, for every block $C\in\mathcal{B}_b$ either we can conclude that $(G,k)$ is a \textsc{Yes}-instance or we have that 
 $$
 |V(C)|\leq 2|C_{\text{ext}}|+|N_{G}(C_{\text{int}})\cap S| \cdot (8k+1)\leq 2|C_{\text{ext}}|+|S|(8k+1).
 $$

Then we can write
\begin{eqnarray*}
\sum_{C\in\mathcal{B}_b}{|V(C)|} & \leq &  2 \sum_{C\in\mathcal{B}_b}{|C_{\text{ext}}|}+\sum_{C\in\mathcal{B}_b}{|S|(8k+1)}\\
 & \leq & 2 \ell  \cdot |\bigcup_{C\in\mathcal{B}_{\ell}}C_{\text{ext}}|+\ell|S|(8k+1) \\
 & \leq & 24 \ell k(8k-3)+3\ell k(8k+1),
\end{eqnarray*}
where in the second inequality we have used twice that $|\mathcal{B}_b|\leq \ell$, and the last inequality follows from Lemma~\ref{Crow-Coro-7}.

For each block $C\in\mathcal{B}_s$ we have $|V(C)|\leq r + 1$, as $C$ contains only vertices of the $r$ independent sets $I^1,I^2,\ldots,I^r$ plus possibly one vertex of some of the cliques. Indeed, otherwise if  $|V(C)|> r + 1$ then there are two adjacent vertices in the same independent set, which is impossible. Therefore, by~Lemma~\ref{Crow-Coro-6}, we can bound  the sum of the number of vertices in $C\in\mathcal{B}_s$ as $\sum_{C\in\mathcal{B}_s}{|V(C)|}\leq (r+1)\cdot 6k(8k-3)$.

Therefore, by combining the bounds from the two above paragraphs we conclude that
\begin{equation}\label{eq:non-path-blocks}
\sum_{C\in\mathcal{B}}{|V(C)|} = \sum_{C\in\mathcal{B}_b}{|V(C)|}+\sum_{C\in\mathcal{B}_s}{|V(C)|} \leq (2r + 9\ell)3k(8k+1) =: g(k,r,\ell).
\end{equation}

It just remains to bound the number of path vertices in the blocks of $G-S$. Note that we cannot use directly the bound on the number of path vertices in $G-S$ given by~Lemma~\ref{Crow-Coro-6}, as this bound is $O(k^3)$, which is too much for our purposes.

Let $\mathcal{P}$ be the set of path vertices of $G-S$. In order to bound the size of $\mathcal{P}$, we argue separately about the components in $G[\mathcal{P}]$ having at most two vertices (Claim~\ref{lemma4.8}) and the remaining components (Claim~\ref{claim:2-3}), where $G[\mathcal{P}]$ denotes the subgraph of $G$ induced by $\mathcal{P}$.

We call an {\it{isolated vertex}} (resp. \emph{isolated edge}) a connected component of $G[\mathcal{P}]$ of size 1 (resp. 2).
We show in Claim~\ref{lemma4.8}  that the number of isolated vertices and isolated edges in $G[\mathcal{P}]$ are both bounded by the number of non-path blocks, which is at most $6k(8k-3)$ by~Lemma~\ref{Crow-Coro-6}.

	\begin{claimN}\label{lemma4.8}
The number of isolated vertices in $G[\mathcal{P}]$ and the number of isolated edges in $G[\mathcal{P}]$ are upper-bounded by the number of non-path blocks.	
	\end{claimN}
	\begin{proof}
	Let us first deal with the number of isolated vertices in $G[\mathcal{P}]$, which we denote by $p_0$. We show by induction on $p_0$ that this number is at most the number of non-path blocks minus one. Recall that a vertex is isolated if and only if it is exclusively adjacent to non-path vertices. We contract each non-path block into a  black vertex and represent each path vertex by a white vertex. So the forest of cliques in $G-S$ is now represented by a simple forest colored by two colors: black and white.
If $p_0=1$, then there are at least two black vertices adjacent to the single white vertex of $G[\mathcal{P}]$. So the property is valid and the basis of the induction is proved. Suppose that the property is valid for $p_0=i > 1$, and consider $G-S$ such that the number of isolated vertices is $i+1$. Choose arbitrarily one of these isolated vertices. Without loss of generality, we can suppose that $G-S$ is connected. Otherwise we can deal with the connected components of $G-S$ separately to achieve the same result. The removal of this white vertex disconnects the tree, increasing the number of connected components. For all connected components the property holds by induction hypothesis, that is, the number of isolated vertices in each connected component is at most the number of non-path blocks in this component minus one. So, the number of isolated vertices in the original graph is at most the total number of non-path blocks minus one, and the first part of the claim follows.

As for the number of isolated edges in $G[\mathcal{P}]$, similarly to above,  we contract each non-path block into a black vertex and represent each path vertex as a white vertex. An isolated edge is represented by an edge that has white endpoints. We identify all these edges and contract each of them into a red vertex. As these edges are isolated,  the red vertices have only black neighbors. Then, proceeding by induction as we did for the isolated vertices, we can show that the number of isolated edges is upper-bounded by the number of non-path blocks.
	\end{proof}

Let $\mathcal{P}_1$ be the subset of $\mathcal{P}$ formed by the non-isolated vertices. Note that the minimum degree of the vertices in $G[\mathcal{P}_1]$, which we denote by $\delta (G[\mathcal{P}_1])$, is such that $\delta (G[\mathcal{P}_1])\geq 1$, so clearly the number of edges of $G[\mathcal{P}_1]$ is at least $\frac{|\mathcal{P}_1|}{2}$; this value is attained when $G[\mathcal{P}_1]$ consists of disjoint edges. We need to improve this bound of $\frac{|\mathcal{P}_1|}{2}$ for being able to state Equation~(\ref{eq:P1}) below, and we can easily do it by forbidding this latter case.


	\begin{claimN}\label{claim:2-3}
If each connected component of $G[\mathcal{P}_1]$ has at least three vertices, then $|E(G[\mathcal{P}_1])|\geq \frac{2|\mathcal{P}_1|}{3}$.
	\end{claimN}
	\begin{proof} Without loss of generality, we can suppose that $G[\mathcal{P}_1]$ is connected by the same reason mentioned in the proof of Claim~\ref{lemma4.8}. Let $p=|\mathcal{P}_1|$. As $G[\mathcal{P}_1]$ is a tree, we have $|E(G[\mathcal{P}_1])|= p-1$. If $p\geq 3$ then $\frac{p}{3}\geq 1$, that is, $p-\frac{2p}{3}\geq 1$ and $p-1\geq\frac{2p}{3}$ as desired. \end{proof}

The induced subgraph $G[\mathcal{P}_1]$ is a forest, so by Theorem~\ref{theorem2.1} $G[\mathcal{P}_1]$ is balanced. Let $e_p$ denote the number of edges in $G[\mathcal{P}_1]$. Then $\beta(G[\mathcal{P}_1])=e_p$  and ${\sf pt}(G[\mathcal{P}_1])=\frac{e_p}{2}+\frac{p-1}{4}$ if we suppose $G[\mathcal{P}_1]$ to be connected, where $p = |\mathcal{P}_1|$. Claim~\ref{claim:2-3} implies that
\begin{equation}\label{eq:P1}
\beta(G[\mathcal{P}_1])-{\sf pt}(G[\mathcal{P}_1])=e_p-\frac{e_p}{2}-\frac{p-1}{4}=
\frac{2e_p-p+1}{4}\geq \frac{\frac{4p}{3}-p+1}{4} \geq \frac{\frac{p}{3}}{4}.
\end{equation}

	\begin{claimN}\label{claim:number-comp}
Let $(G,k)$ be an instance of {\sc{Signed Max Cut ATLB}}. If $\mathcal{P}$ is the set of path vertices in $G-S$, then the number of connected components of $G[\mathcal{P}]$ is at most $6k(8k-3)$. If ${\cal W}=G-\mathcal{P}$, then the number of connected components of ${\cal W}$ is  at most $6k(8k-2)$.
	\end{claimN}
\begin{proof}
 The same ideas used in the proof of Claim~\ref{lemma4.8} easily imply that the number of connected components of $G[\mathcal{P}]$ is at most $6k(8k-3)$. Since ${\cal W}\cap(G-S)$ is the complement of $G[\mathcal{P}]$ in $G-S$, each vertex in ${\cal W} \cap(G-S)$ belongs to a non-path block. And as the size of $S$ is at most $3k$, by~Lemma~\ref{Crow-Coro-6} the number of connected components of ${\cal W}$ is  at most $3k + 6k(8k-3) \leq  6k(8k-2)$. \end{proof}

We are now ready to piece everything together and conclude the proof of the theorem.

\vspace{.3cm}

	
 Due to Claim~\ref{lemma4.8}, by incurring an additive term of $18k(8k-3)$ to the size of the kernel we may assume that each component of $G[\mathcal{P}]$ has at least three vertices, so the hypothesis of Claim~\ref{claim:2-3} holds. We apply Lemma~\ref{beta} to the graph $G$ with $U=\mathcal{P}$. We have $c_1$ connected components of $G[\mathcal{P}]$, $c_2$ connected components of $W$, and $\beta(G)\geq {\sf pt}(G)+\frac{k_1+k_2-(c_1+c_2-1)}{4}$, where $\beta(G[\mathcal{P}])\geq {\sf pt}(G[\mathcal{P}])+\frac{k_1}{4}$ and $\beta(W)\geq {\sf pt}(W)+\frac{k_2}{4}$. It holds of course that $k_2 \geq 0$. By Equation~(\ref{eq:P1}) we have $k_1\geq \frac{p}{3}$ for $p=|\mathcal{P}_1|$ and, by Claim~\ref{claim:number-comp}, it holds that $c_1 + c_2 \leq 12k(8k-2) \leq 96k^2$. Then $\beta(G)\geq {\sf pt}(G)+\frac{\frac{p}{3}-192k^2+1}{4}$. Therefore, if $\frac{p}{3}-192k^2+1\geq k$, then  $(G,k)$ is a {\sc{Yes}}-instance. Otherwise, $\frac{p}{3}\leq k-1+192k^2$, that is, we may assume that $p\leq 576k^2+3k-3$. 


Therefore, combining Equation~(\ref{eq:non-path-blocks}) with the above discussion we can conclude that
$$
|V(G)| \leq   |S| + \sum_{C\in\mathcal{B}}{|V(C)|} + |\mathcal{P}| \leq 3k + g(k,r,\ell) + 18k(8k-3) + p = O(k^2) .
$$
Finally, note that in the whole proof, when we use some result of Crowston et al.~\cite{crowston2013maximum} in which one of the possible outputs is that $(G,k)$ is a \textsc{Yes}-instance (like~Lemma 14 or to check whether $p$ is at most $576k^2+3k-3$ or not), the condition to be checked concerns just the blocks of $G-S$, which we can obtain in polynomial time by Proposition~\ref{coro4.1}. Thus, as we claimed, we do {\sl not} need to compute any partition of $V(G)$ into cliques $K^1,K^2,\ldots,K^\ell$  and independent sets $I^1,I^2,\ldots,I^r$. This concludes the proof of the theorem.\end{proof}

\section{A linear kernel on subclasses of split graphs}
\label{sec:linear-kernel}

%

With the objective of improving the quadratic kernel on $(r,\ell)$-graphs presented in Section~\ref{sec:quadratic-kernel}, in Subsection~\ref{subsec:linear-kernel} we provide a linear kernel on a smaller graph class, namely on the subclass of $d^*$-split graphs for every integer $d \geq 1$, which is defined as follows. A graph $G=(V,E)$ is a \emph{$d^*$-split graph} if $V$ can be partitioned into a clique $K$ and an independent set $I$ such that every vertex in $K$ has at least one neighbor in $I$, and every vertex in $I$ has degree at most $d$.


We first prove in Subsection~\ref{ap:NP-hard} that the problem remains {\sf{NP}}-hard restricted to this class of graphs for every $d \geq 2$. In fact, we prove that even {\sc{Max Cut}} is {\sf{NP}}-hard, and this result is tight in terms of $d$. Indeed, for $d = 0$, a $0^*$-split graph is an independent set and therefore {\sc{Max Cut}} is trivial. For $d = 1$, the claim is given in the following simple observation.

\begin{remark}{\sc{Max Cut}} on $1^*$-split graphs can be solved in polynomial time.
\end{remark}
\begin{proof} Note that by the definition of  $1^*$-split graphs, such a graph has a very precise structure. Namely, it consists of an arbitrarily large clique $K$ such that every vertex $v$ in $K$ has a private neighbor $v'$ in $I$ (that is, a vertex of degree one). Then, an optimal solution of {\sc{Max Cut}} consists of a balanced partition of the clique, and for each $v \in K$, we place its private neighbor $v'$ in the opposite part of $v$.
\end{proof}

\subsection{Max Cut is NP-hard on $d^*$-split graphs}
\label{ap:NP-hard}

In this subsection we prove that \textsc{Max Cut}, which is a particular case of {\sc{Signed Max Cut}}, is {\sf{NP}}-hard on $d^*$-split graphs for every fixed $d \geq 2$.
We first prove easily that {\sc{Max Cut}} is {\sf{NP}}-hard on
graphs without universal vertices (a \emph{universal} vertex is a vertex adjacent to every other vertex in the graph). Then, using a proof  from Bodlaender and Jansen~\cite{bodlaender} on the complexity of \textsc{Max Cut}, we prove that this problem is still {\sf{NP}}-hard on $2^*$-split graphs and, consequently, {\sf{NP}}-hard on $d^*$-split graphs for every fixed $d\geq 2$.



\begin{lemma}\label{lemma4.3}
{\sc{Max Cut}} is {\sf{NP}}-hard on graphs without universal vertices.		
\end{lemma}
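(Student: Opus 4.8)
The plan is to reduce from \textsc{Max Cut} on arbitrary graphs, which is \textsf{NP}-hard~\cite{garey1976some}, by a trivial padding argument. Given an instance $(G,t)$ of \textsc{Max Cut} with $G=(V,E)$, we may assume $|V|\geq 2$, since otherwise the maximum cut has size $0$ and the instance is decided in constant time. We build $G'$ from $G$ by adding one new vertex $z$ with no incident edges (so $z$ is isolated in $G'$), keeping the same target $t$. Throughout, write $\mathsf{mc}(H)$ for the size of a maximum cut of a graph $H$. The construction takes linear time.

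It then remains to verify the two required properties. First, $G'$ has no universal vertex: the vertex $z$ is isolated, hence adjacent to no other vertex and in particular not universal (note $G'$ has at least two vertices, so ``universal'' is not vacuous here); and every $v\in V$ is non-adjacent to $z$ in $G'$, so $v$ is not universal in $G'$ either. Second, since $z$ contributes no edge to any cut of $G'$, for every bipartition of $V(G')$ the number of crossing edges equals that of the induced bipartition of $V$, whence $\mathsf{mc}(G')=\mathsf{mc}(G)$. Consequently $(G,t)$ is a \textsc{Yes}-instance of \textsc{Max Cut} if and only if $(G',t)$ is, proving the lemma.

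There is essentially no obstacle; the only points needing (minimal) care are disposing of the degenerate case $|V|\leq 1$ so that the reduced graph genuinely has at least two vertices, and observing that the newly added vertex is itself non-universal. If one additionally wishes the reduced instance to be free of isolated vertices (or connected), the same effect is obtained by taking two disjoint copies $G_1,G_2$ of $G$ with target $2t$: in the disjoint union $G_1\sqcup G_2$ no vertex of one copy is adjacent to any vertex of the other (so none is universal, using again $|V|\geq 2$), and $\mathsf{mc}(G_1\sqcup G_2)=2\,\mathsf{mc}(G)$. This flexible padding is exactly what is needed as the starting point for the sharper reduction to $2^*$-split graphs in the next step.
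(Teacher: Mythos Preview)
Your proof is correct. The paper uses precisely the two-disjoint-copies construction you mention at the end (with target $2t$) rather than your primary single-isolated-vertex padding; both are trivial reductions that kill universal vertices by making some pair of vertices non-adjacent, and either works equally well as input to the subsequent reduction to $2^*$-split graphs. One small slip in your side remark: two disjoint copies of $G$ are of course not connected, so the parenthetical ``(or connected)'' should be dropped.
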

\begin{proof}
We use the well known fact that {\sc{Max Cut}} is {\sf{NP}}-hard on general graphs~\cite{garey1976some} to show that this problem is still {\sf{NP}}-hard on graphs without universal vertices. For this, we consider a graph $G$ as instance of  {\sc{Max Cut}}, and construct a new graph $G'=2G$ made of two disjoint copies of $G$. Obviously $G'$ has no universal vertices because it is disconnected.
Let $\textsc{MaxCut}(G)$ be the size of a maximum cut for $G$ and $\textsc{MaxCut}(G')$ be the size of a maximum cut for $G'$. We claim that $\textsc{MaxCut}(G')=2 \cdot \textsc{MaxCut}(G)$. Indeed, $\textsc{MaxCut}(G')\geq 2 \cdot \textsc{MaxCut}(G)$ because if $[V_1,V_2]$ is a cut in $G$ of maximum size, then $2 \cdot \textsc{MaxCut}(G)$ is the size of the cut $[2V_1,2V_2]$; see Fig.~\ref{fig:no-universal-vertices} for an illustration. On the other hand, if $\textsc{MaxCut}(G')>2 \cdot \textsc{MaxCut}(G)$, then as $G'$ is made of two disjoint copies of $G$, necessarily $G$ contains a cut of size greater than $\textsc{MaxCut}(G)$, a contradiction.
\end{proof}

\begin{figure}[h]
\vspace{-1.2cm}
\begin{center}
 \includegraphics[width=.82\textwidth]{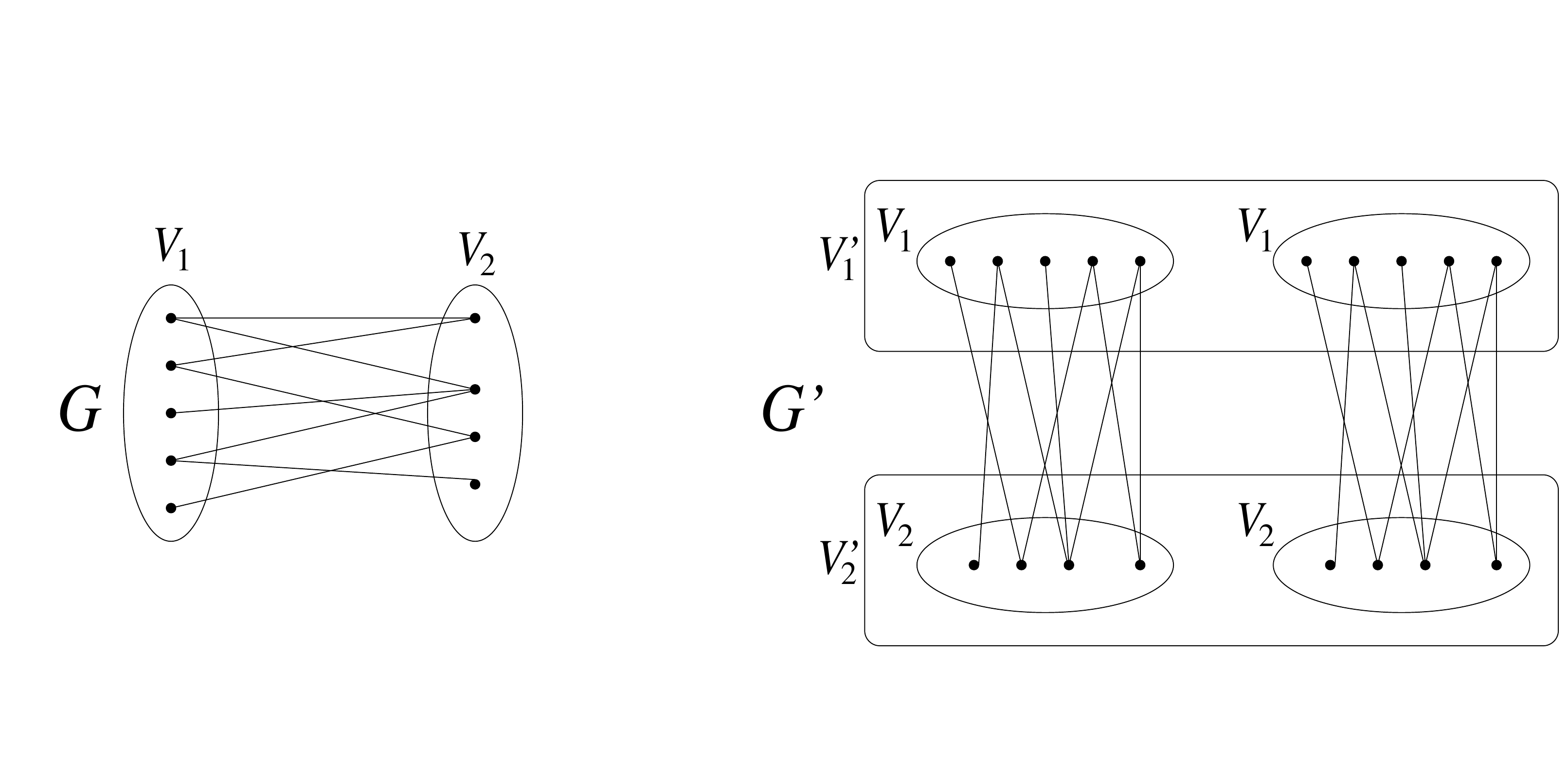}
\vspace{-.35cm}\caption{Reduction to show that \textsc{Max Cut} is {\sf{NP}}-hard on graphs without universal vertices.} \label{fig:no-universal-vertices}\vspace{-.2cm}
\end{center}
\end{figure}

The construction given in the proof of the next theorem is exactly the same as the one given by  Bodlaender and Jansen in~\cite[Theorem 3.1]{bodlaender}, but we reproduce it here because we prove that if we start with a graph without universal vertices (we can do so thanks to Lemma~\ref{lemma4.3}), then the graph constructed in the reduction is a $2^*$-split graph.

		
\begin{theorem}\label{theorem4.4}
{\sc{Max Cut}} is {\sf{NP}}-hard on $2^*$-split graphs.		
\end{theorem}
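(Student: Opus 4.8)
The plan is to start from an arbitrary instance $G$ of \textsc{Max Cut}, which we may assume has no universal vertices by Lemma~\ref{lemma4.3}, and to reproduce the reduction of Bodlaender and Jansen~\cite[Theorem 3.1]{bodlaender} verbatim, while additionally checking that its output graph $G'$ is a $2^*$-split graph. Recall the shape of that construction: from $G=(V,E)$ one builds a graph whose clique part $K$ consists of one vertex $v_e$ for each edge $e\in E(G)$, made pairwise adjacent, and whose independent part $I$ consists of one vertex $w_u$ for each vertex $u\in V(G)$; the vertex $w_u$ is joined precisely to those edge-vertices $v_e$ with $u\in e$. Thus every $w_u\in I$ has degree exactly $\deg_G(u)$ in $G'$ — so $I$ is \emph{not} automatically low-degree — but a further standard gadget (subdividing/splitting so that each original vertex of degree $d_u$ is replaced by $d_u$ private copies of degree two, tied together by a large clique multiplier that forces them to the same side) brings every vertex of $I$ down to degree exactly $2$. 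After recalling the exact construction and the correspondence between cuts of $G'$ and cuts of $G$ that the authors establish, one only has to verify the structural claims.

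First I would write down the reduction explicitly (vertices, edges, and the target value $t'$ as a function of the target $t$ for $G$ and of $|E(G)|$), and quote from~\cite{bodlaender} the equivalence ``$G$ has a cut of size $\ge t$ iff $G'$ has a cut of size $\ge t'$'', which we do not reprove. Second, I would check that $(K,I)$ is indeed a split partition: $K$ is a clique by construction and $I$ is independent because no two vertices $w_u,w_{u'}$ are joined. Third, I would check the two extra conditions in the definition of a $2^*$-split graph: (i) every vertex of $I$ has degree at most $2$ — this is exactly why the degree-two gadget is applied, and it is where the hypothesis ``$G$ has no universal vertices'' is in fact not even needed, but I keep it because it is harmless and matches the statement's lead-in; and (ii) every vertex of $K$ has at least one neighbor in $I$ — here is where I would use that $G$ has no universal vertices. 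Concretely, an edge-vertex $v_e$ with $e=\{u,u'\}$ is adjacent in $G'$ to the copies of $w_u$ and $w_{u'}$, so it always has neighbors in $I$; the no-universal-vertex hypothesis is used to rule out a degenerate edge-vertex with empty $I$-side, or, in the variant of the construction where isolated/low-degree vertices of $G$ would create edge-vertices with no private $I$-neighbor, to guarantee that each $v_e$ retains such a neighbor after the gadget is applied.

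The main obstacle I anticipate is bookkeeping rather than any genuine difficulty: I must make sure that the degree-reduction gadget that pushes $I$ to degree $\le 2$ does not accidentally create a universal vertex in $K$ or destroy the clique structure, and that it changes the \textsc{Max Cut} value in a controlled, additively-shifted way so that the Bodlaender–Jansen equivalence still goes through with an updated target. I would handle this by choosing the clique-multiplier parameter large enough (polynomial in $|V(G)|+|E(G)|$) that any optimal cut must keep each group of private copies monochromatic, reducing the analysis of $G'$ to that of the Bodlaender–Jansen graph, for which the equivalence is already known. Finally I would note that the whole construction is polynomial-time and that $2^*$-split graphs are $d^*$-split graphs for every $d\ge 2$ (one may simply add, for each vertex of $I$ needing it, extra pendant edges to $K$ — which only makes $K$-vertices have more $I$-neighbors and keeps $I$-degrees at most $d$), so the same reduction witnesses \textsf{NP}-hardness of \textsc{Max Cut}, and hence of \textsc{Signed Max Cut}, on $d^*$-split graphs for every fixed $d\ge 2$, which is the assertion of Theorem~\ref{theorem4.4} and of the surrounding discussion.
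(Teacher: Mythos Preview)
Your proposal misremembers the Bodlaender--Jansen reduction, and this sends you down an unnecessary and unfinished detour. In the actual construction (reproduced in the paper), the \emph{clique} is $V(G)$ and the \emph{independent set} is the edge set $\overline{E}$ of the \emph{complement} $\overline{G}$; each vertex $e=\{u,v\}\in\overline{E}$ is joined to $u$ and $v$, so every independent-set vertex already has degree exactly~$2$ with no gadgetry whatsoever. You have swapped the roles (clique $=$ edges of $G$, independent set $=$ vertices of $G$), which is why your $I$-vertices end up with unbounded degree and you are forced to invent a ``degree-reduction gadget'' that you never fully specify or analyze.

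This also explains your confusion about where Lemma~\ref{lemma4.3} is used. With the correct construction, ``$G$ has no universal vertex'' is exactly ``$\overline{G}$ has no isolated vertex'', which means every $v\in V$ is an endpoint of some $e\in\overline{E}$ and hence every clique vertex of $H$ has a neighbor in the independent set. That is the entire content of condition~(ii) in the definition of a $2^*$-split graph; no case analysis about ``degenerate edge-vertices'' is needed. Once you use the right construction, the proof is two lines of structural checking on top of the Bodlaender--Jansen equivalence, and the whole gadget discussion (and its attendant ``main obstacle'') disappears.
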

\begin{proof}
Let a graph $G=(V,E)$ be given and let $\overline{G}=(V,\overline{E})$ be the complement of $G$. Let $H=(V\cup \overline{E}, F)$, where $F=\left\{(v,w)\mid v,w\in V,v\neq w \right\}\cup $ $\left\{(v,e) \mid v\in V, e\in \overline{E}, v\right.$ is an endpoint of edge $\left. e\right\}$. Then $V$ forms a clique, $\overline{E}$ forms an independent set in $H$, and every edge-representing vertex $e$ is connected to the vertices that represent its endpoints. Therefore $H$ is a split graph in which all the vertices in the independent set have  degree exactly two. We claim that $G$ allows a partition with at least $p$ cut edges if and only if $H$ allows a partition with at least $2|\overline{E}|+p$ cut edges.
		
		Suppose first we have a partition $(W_1,W_2)$ of $G$ with at least $p$ cut edges. We partition the vertices of $H$ as follows: partition $V$ as in the partition of $G$; for every $e\in \overline{E}$, if both endpoints of $e$ belong to $W_1$, then put $e$ in $W_2$, otherwise put $e$ in $W_1$. It is easy to see that this partition gives the desired number of cut edges.
		
		Now suppose we have a partition $(W_1,W_2)$ of $H$ with at least $2|\overline{E}|+p$ cut edges. Partition the vertices of $G$ into two subsets: $W_1\cap V$ and $W_2\cap V$. This partition gives the desired number of cut edges. This can be noted as follows: for every edge $\{v,w\}\in E$, we have one cut edge in $H$ if $\left\{v,w\right\}$ is a cut edge in $G$, otherwise we have no cut edge. For every edge $e=\left\{v,w\right\}\in \overline{E}$, we have that out of the three edges $\left\{v,w\right\}, \left\{e,v\right\}, \left\{e,w\right\}$, exactly two will be cut edges. Hence, the total number of cut edges in $H$ equals the number of cut edges in $G$ plus $2|\overline{E}|$. Note that $H$ can clearly be constructed from $G$ in polynomial time.

By Lemma~\ref{lemma4.3}, we can assume that the graph  $G=(V,E)$ has no universal vertices. Then the graph $\overline{G}$ constructed above has no isolated vertices, and therefore every vertex $v\in V$ has at least one neighbor in $\overline{E}$. Since every vertex in $\overline{E}$ has degree two, the graph $H$ constructed in the above reduction is indeed a $2^*$-split graph, and the theorem follows.
\end{proof}
		

Since for every $d \geq 0$, the class of  $d^*$-split graphs contains the class of $2^*$-split graphs, the following corollary is a direct consequence of Theorem~\ref{theorem4.4}.
		
\begin{corollary}\label{cor:NP-hard}
{\sc{Max Cut}} is {\sf{NP}}-hard on $d^*$-split graphs for every fixed $d\geq 2$.				 
\end{corollary}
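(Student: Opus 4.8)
The plan is to derive the result directly from Theorem~\ref{theorem4.4} by a trivial containment argument. For any fixed $d \geq 2$, the definition of $d^*$-split graphs only bounds the maximum degree of vertices in the independent set $I$ from above by $d$; hence every $2^*$-split graph is, in particular, a $d^*$-split graph, since a degree-bound of $2$ implies a degree-bound of $d$ whenever $d \geq 2$. Thus the class of $2^*$-split graphs is contained in the class of $d^*$-split graphs for every fixed $d \geq 2$.

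First I would observe that \textsc{Max Cut} is {\sf{NP}}-hard on $2^*$-split graphs by Theorem~\ref{theorem4.4}. Then, since every $2^*$-split graph is a $d^*$-split graph for all $d \geq 2$, any polynomial-time algorithm solving \textsc{Max Cut} on $d^*$-split graphs would in particular solve it on all $2^*$-split graphs, contradicting Theorem~\ref{theorem4.4} unless ${\sf P} = {\sf NP}$. Equivalently, one can view the reduction in the proof of Theorem~\ref{theorem4.4} itself as already producing instances that are simultaneously $2^*$-split and (a fortiori) $d^*$-split for every $d \geq 2$, so literally the same reduction witnesses {\sf{NP}}-hardness on $d^*$-split graphs. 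Either framing gives the corollary immediately.

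There is essentially no obstacle here: the statement is a one-line corollary of the preceding theorem, and the only thing to be careful about is the direction of the degree inequality in the definition of $d^*$-split graphs — it is an \emph{upper} bound, so shrinking the allowed degree makes the class smaller, and the hardness for the smallest nontrivial value $d=2$ propagates upward to all larger $d$. The tightness remark for $d \in \{0,1\}$ is handled separately in the excerpt (trivial case and the preceding \textsc{Remark}), so nothing further is needed for the corollary itself.
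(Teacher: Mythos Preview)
Your proposal is correct and matches the paper's own argument essentially verbatim: the paper also derives the corollary in one line from Theorem~\ref{theorem4.4} via the containment of $2^*$-split graphs in $d^*$-split graphs for $d \geq 2$. There is nothing to add.
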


\subsection{The kernelization algorithm}
\label{subsec:linear-kernel}

As we will see in the proof of Theorem~\ref{thm:linear-d-split} below, our kernelization algorithm is the simplest possible algorithm that one could imagine, as it does {\sl nothing} to the input graph; its interest lies on the analysis of the kernel size,  which uses the following two new reduction rules.

The first one  is a one-way reduction rule that is a generalization of the one-way Rule~6 given by Crowston et al.~\cite{crowston2013maximum}.

\vspace{.3cm}

\noindent {\bf{Rule $\mathbf{6^+}$.}} {\it{Let $G$ be a connected graph. If $v\in V(G)$ and $u_1,\ldots,u_c$ are pairwise non-adjacent neighbors of $v$ such that $c \geq 2$ and $G - \{v,u_1,\ldots,u_c\}$ is connected, then delete $v,u_1,\ldots,u_c$ and set $k'=k-c+1$.}}

\vspace{.3cm}

Note that Rule~6 of Crowston et al.~\cite{crowston2013maximum} corresponds exactly to the case $c=2$ of Rule~$6^+$ above. 


\begin{lemma}\label{lem:Rule6+-safe}\label{lem:6+-safe}
Rule~$6^+$ is safe.		
\end{lemma}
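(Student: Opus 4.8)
The plan is to show that Rule~$6^+$ is safe, i.e. that it never turns a \textsc{No}-instance into a \textsc{Yes}-instance. Equivalently, by contraposition, I would show that whenever $(G',k')$ is a \textsc{Yes}-instance, so is the original $(G,k)$, where $G' = G - \{v,u_1,\dots,u_c\}$ and $k' = k-c+1$. So suppose $G'$ has a balanced subgraph of size at least ${\sf pt}(G') + \frac{k'}{4}$. Since $G'$ is connected (by the hypothesis of the rule) and has $n - (c+1)$ vertices, ${\sf pt}(G') = \frac{m'}{2} + \frac{n-c-2}{4}$ where $m' = |E(G')|$. The goal is to reinsert $v, u_1, \dots, u_c$ and the incident edges and argue we gain enough balanced edges to beat ${\sf pt}(G) + \frac{k}{4}$.

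The key computation is a gain argument. Let $(V_1', V_2')$ be the bipartition of $G'$ realizing the balanced subgraph of size $\beta' \ge {\sf pt}(G') + \frac{k'}{4}$. I would extend this to a bipartition of $G$ as follows: first place $v$ in whichever part gives it the better (balanced) contribution with respect to its neighbors in $G'$; then, since the $u_i$ are pairwise non-adjacent and each $u_i$ is adjacent to $v$, place each $u_i$ independently in the part that maximizes its own balanced contribution (this is a local, greedy choice since the $u_i$ induce no edges among themselves). The crucial point is the standard averaging estimate from Crowston et al.: reinserting a vertex with $t$ edges to the already-placed graph, and choosing its side greedily, yields a balanced-edge gain of at least $\frac{t}{2} + \frac{1}{4}$ — the extra $\frac14$ over the naive $\frac{t}{2}$ coming from the fact that a best-of-two choice beats the average by at least a quarter-edge whenever $t \ge 1$. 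Each of $u_1, \dots, u_c$ contributes at least such a gain (here $t \ge 1$ because each $u_i$ is adjacent to $v$), and $v$ contributes at least $\frac{\deg_G(v)}{2}$ balanced edges (I do not even need the extra $\frac14$ from $v$). Summing: if $q$ edges are added in total, the new balanced subgraph has size at least $\beta' + \frac{q}{2} + \frac{c}{4}$. Since $G$ has $n$ vertices and $m = m' + q$ edges, ${\sf pt}(G) = \frac{m'+q}{2} + \frac{n-1}{4}$, so
$$
\beta(G) \ge \beta' + \frac{q}{2} + \frac{c}{4} \ge {\sf pt}(G') + \frac{k'}{4} + \frac{q}{2} + \frac{c}{4} = \frac{m'}{2} + \frac{n-c-2}{4} + \frac{k-c+1}{4} + \frac{q}{2} + \frac{c}{4},
$$
and one checks the constant terms: $\frac{n-c-2}{4} + \frac{-c+1}{4} + \frac{c}{4} = \frac{n - c - 1}{4}$, which is \emph{not} quite $\frac{n-1}{4}$ — so I need the extra slack. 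Re-examining, the gain from each $u_i$ should in fact be counted more carefully, or one should also use the $+\frac14$ gain from $v$ itself; tracking all $c+1$ reinserted vertices each giving a $+\frac14$ bonus yields $\beta' + \frac{q}{2} + \frac{c+1}{4}$, and then $\frac{n-c-2}{4} + \frac{-c+1}{4} + \frac{c+1}{4} = \frac{n-1}{4}$ exactly, giving $\beta(G) \ge {\sf pt}(G) + \frac{k}{4}$ as required.

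The main obstacle I anticipate is getting the constant bookkeeping exactly right — in particular, being careful about whether each reinserted vertex genuinely earns the extra $+\frac14$ (which requires it to have at least one edge to the previously-placed subgraph, and requires the greedy side-choice to be applied in the right order: $v$ first, then each $u_i$, each of which sees $v$). A secondary subtlety is the connectivity hypothesis: the rule requires both $G$ and $G - \{v,u_1,\dots,u_c\}$ to be connected, and I must make sure this is exactly what is needed so that ${\sf pt}$ of both graphs has the single-component form $\frac{m}{2}+\frac{n-1}{4}$ rather than a version with more components; if $G'$ had more components the lower bound ${\sf pt}(G')$ would be larger and the argument could fail, which is precisely why the rule demands $G'$ connected. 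Finally, since the case $c = 2$ is exactly Rule~6 of Crowston et al., which they already proved safe, an alternative (and perhaps cleaner) route is an induction on $c$: peel off one non-adjacent neighbor $u_c$ at a time via Rule~6, provided one can verify the intermediate graphs stay connected — but establishing that connectivity invariant through the induction may itself be delicate, so I would likely favor the direct gain argument above.
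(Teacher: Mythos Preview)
Your approach has a genuine gap. First, the arithmetic in your final check is wrong:
\[
\frac{n-c-2}{4} + \frac{-c+1}{4} + \frac{c+1}{4} \;=\; \frac{n-c}{4},
\]
not $\frac{n-1}{4}$; for $c\ge 2$ you are short by $\frac{c-1}{4}$. Second, and more fundamentally, the claim that a greedy side-choice for a vertex with $t\ge 1$ edges to the already-placed graph always yields at least $\frac{t}{2}+\frac{1}{4}$ balanced edges is false: when $t$ is even the two sides may split the edges evenly and you get exactly $\frac{t}{2}$. In the worst case ($q_v$ even and each $q_i$ odd, so each $u_i$ has even degree $1+q_i$ after $v$ is placed) your vertex-by-vertex insertion gains only $\frac{q}{2}$ balanced edges in total, which is not enough. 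The induction-on-$c$ route you mention at the end has the same problem: peeling off a single $u_c$ is \emph{not} an instance of Rule~6 (that rule needs two non-adjacent neighbours), and the intermediate connectivity is not guaranteed either.

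The paper's proof avoids this by treating the star $P=G[\{v,u_1,\dots,u_c\}]$ as a single block rather than inserting vertices one by one. Since $P$ is a tree it is balanced regardless of signs, so $\beta(P)=c={\sf pt}(P)+\frac{c}{4}$. One then applies Lemma~\ref{beta} with $U=V(P)$ and $W=V(G')$: both $G[U]$ and $G[W]$ are connected (this is exactly where the hypotheses of the rule are used), so $c_1=c_2=1$, and with $k_1=c$ and $k_2=k'$ the lemma gives
\[
\beta(G)\ \ge\ {\sf pt}(G)+\frac{k_1+k_2-(c_1+c_2-1)}{4}\ =\ {\sf pt}(G)+\frac{c+k'-1}{4}\ =\ {\sf pt}(G)+\frac{k}{4}.
\]
The crucial difference with your argument is that combining the \emph{optimal} bipartition of the star (namely $\{v\}$ versus $\{u_1,\dots,u_c\}$) with the optimal bipartition of $G'$ captures all $c$ star edges, contributing a full $c$ rather than $\frac{c}{2}$ to the balanced count; the missing $\frac{c}{2}$ is precisely the gap in your bookkeeping.
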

		\begin{proof}
Let $v\in V(G)$ and $u_1,\ldots,u_c$ be as in the description of Rule~$6^+$ and let $P=G[\left\{v,u_1,\ldots,u_c\right\}]$. Note that ${\sf pt}(P)=\frac{c}{2}+\frac{c}{4}=\frac{3c}{4} $ and since $P$ is a tree (namely, a star), $P$ is a balanced graph by Theorem~\ref{theorem2.1} whatever the signs of its edges. Therefore, $\beta(P)=c={\sf pt}(P)+\frac{c}{4}$. Let $G'$ be the graph obtained from $G$ by the deletion of $v,u_1,\ldots,u_c$, so $G'$ is connected by hypothesis.
Suppose that $\beta(G')\geq {\sf pt}(G')+\frac{k'}{4}$, where $k'=k-c+1$. Then, by Lemma~\ref{beta}, $\beta(G)\geq {\sf pt}(G)+ \frac{k'+c-1}{4}={\sf pt}(G)+\frac{k}{4}$, and therefore $G$ is a {\sc{Yes}}-instance. 		
		\end{proof}

The second new rule is a simple two-way reduction rule, which just eliminates vertices of degree 1.
\vspace{.25cm}

\noindent {\bf{Rule A}.} {\it{Let $G$ be a connected graph, and let $v$ be a vertex of degree 1 in $G$. Then delete vertex $v$ and set  $k'=k-1$.}}



\begin{lemma}\label{lem:RuleA}
Rule A is valid.
\end{lemma}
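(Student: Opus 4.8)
The plan is to show that Rule A is a valid two-way reduction rule, i.e.\ that $(G,k)$ is a \textsc{Yes}-instance if and only if $(G',k')$ is, where $G'=G-v$ and $k'=k-1$. Since $v$ has degree $1$ in the connected graph $G$, the graph $G'$ is connected, so the parameterization makes sense for $G'$. Write $u$ for the unique neighbor of $v$ and let $e=\{u,v\}$ be the pendant edge, with some sign. Note that $n(G)=n(G')+1$ and $m(G)=m(G')+1$, so ${\sf pt}(G)=\frac{m(G)}{2}+\frac{n(G)-1}{4}=\frac{m(G')+1}{2}+\frac{n(G')}{4}={\sf pt}(G')+\frac{3}{4}$.

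For the forward direction, suppose $(G',k')$ is a \textsc{Yes}-instance, so $\beta(G')\geq{\sf pt}(G')+\frac{k'}{4}$. Apply Lemma~\ref{beta} with $U=\{v\}$ and $W=V(G')$: we have $\beta(G[U])=0={\sf pt}(G[U])$ (a single vertex, with $c_1=1$), $\beta(G[W])=\beta(G')\geq{\sf pt}(G')+\frac{k'}{4}$ (with $c_2=1$), and $|E(U,W)|=1$. Thus $\beta(G)\geq{\sf pt}(G)+\frac{0+k'-(1+1-1)}{4}={\sf pt}(G)+\frac{k'-1}{4}$. Since $k'=k-1$, this is ${\sf pt}(G)+\frac{k-2}{4}$, which is not quite enough; so instead I would argue directly: take a balanced subgraph $H'$ of $G'$ with $|E(H')|\geq{\sf pt}(G')+\frac{k'}{4}$ and realize it via a bipartition $(V_1,V_2)$ of $V(G')$ witnessing balancedness; extend this to a bipartition of $V(G)$ by placing $v$ on the side of $u$ if $e$ is positive and on the opposite side if $e$ is negative, and add $e$ to $H'$. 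The resulting subgraph $H$ of $G$ is still balanced (adding a pendant edge that is placed consistently with its sign never creates a negative cycle, by Theorem~\ref{theorem2.1}, as $v$ has degree $1$), and $|E(H)|=|E(H')|+1\geq{\sf pt}(G')+\frac{k'}{4}+1={\sf pt}(G)-\frac{3}{4}+\frac{k-1}{4}+1={\sf pt}(G)+\frac{k}{4}$. Hence $(G,k)$ is a \textsc{Yes}-instance.

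For the backward direction, suppose $(G,k)$ is a \textsc{Yes}-instance, and let $H$ be a balanced subgraph of $G$ with $|E(H)|\geq{\sf pt}(G)+\frac{k}{4}$, realized by a bipartition $(V_1,V_2)$ of $V(G)$. We may assume $e\in E(H)$ and that $v$ is placed consistently with the sign of $e$: if $e\notin E(H)$, or $v$ is placed on the ``wrong'' side, then moving $v$ to the correct side and inserting $e$ only increases $|E(H)|$, since $v$ has no other incident edges. Then $H-v$ is a balanced subgraph of $G'$ with exactly one fewer edge, so $\beta(G')\geq|E(H)|-1\geq{\sf pt}(G)+\frac{k}{4}-1={\sf pt}(G')+\frac{3}{4}+\frac{k}{4}-1={\sf pt}(G')+\frac{k-1}{4}={\sf pt}(G')+\frac{k'}{4}$. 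Hence $(G',k')$ is a \textsc{Yes}-instance, completing the equivalence.

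There is no real obstacle here: the only point requiring a little care is the observation that an optimal (or near-optimal) balanced subgraph can always be chosen to include the pendant edge $e$ with $v$ placed according to $e$'s sign, which is immediate because $v$ is a leaf and hence placing it correctly incurs no conflict with any other edge. The bookkeeping with ${\sf pt}(G)={\sf pt}(G')+\frac{3}{4}$ then makes the numbers match exactly with $k'=k-1$. One could alternatively phrase the whole argument through Lemma~\ref{beta} in both directions, being slightly more careful about which balanced subgraph is extended, but the direct leaf argument is cleaner and avoids the off-by-one that the naive application of Lemma~\ref{beta} produces.
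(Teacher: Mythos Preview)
Your proof is correct and follows essentially the same approach as the paper: both arguments hinge on the observation that a pendant edge can always be included in an optimal balanced subgraph (so $\beta(G)=\beta(G')+1$), together with the arithmetic ${\sf pt}(G)={\sf pt}(G')+\tfrac{3}{4}$. The paper states this more compactly as a single equality and derives the equivalence in one line, whereas you spell out the two directions separately (and include an unnecessary detour through Lemma~\ref{beta} before recovering with the direct leaf argument), but the substance is the same.
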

\begin{proof} Let $G$ be a connected graph, let $v$ be a vertex of degree 1 in $G$, and let $G'$ be the graph obtained from $G$ by applying Rule A on vertex $v$. Since the edge containing $v$ in belongs to every optimal balanced subgraph of $G$ (regardless of its sign), it holds that $\beta(G) = \beta(G') +1$. On the other hand, we have that ${\sf pt}(G) = \frac{m}{2}+\frac{n-1}{4}$ and ${\sf pt}(G') = \frac{m-1}{2}+\frac{n-2}{4}$, and therefore $\beta(G) \geq {\sf pt}(G) + \frac{k}{4}$ if and only if $\beta(G') = \beta(G) -1 \geq {\sf pt}(G) + \frac{k}{4}  -1 = {\sf pt}(G') + \frac{k'}{4}$. Thus, $(G,k)$ is a \textsc{Yes}-instance of \textsc{Signed Max Cut ATLB} if and only if $(G',k')$ is.
\end{proof}

We are now ready to prove the main result of this subsection.

\begin{theorem}
\label{thm:linear-d-split} For any fixed integer $d \geq 1$, {\sc{Signed Max Cut ATLB}} on $d^*$-split graphs admits a kernel with at most $4(d+1)k$ vertices.
\end{theorem}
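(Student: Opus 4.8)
The theorem claims that for a $d^*$-split graph $G$, if $(G,k)$ is reduced under all the relevant rules, then $|V(G)| \leq 4(d+1)k$; and the kernelization algorithm "does nothing" because reducedness can be checked and the bound verified directly. The plan is to apply the two-way rules (Rules~8--11) together with Rule~A, and use the one-way Rule~$6^+$ (which is only safe, but that is enough to conclude a \textsc{Yes}-instance when $k'\le 0$) in the analysis. First I would invoke Proposition~\ref{coro4.1} to either declare $(G,k)$ a \textsc{Yes}-instance or obtain $S$ with $|S|\le 3k$ such that $G-S$ is a clique-forest without positive edges. Since $d^*$-split graphs are hereditary-ish in the needed sense (both $G[S]$ and $G-S$ inherit the split structure $K\cup I$), write $K=K_S\cup K'$ and $I=I_S\cup I'$ with $S=K_S\cup I_S$ and $G-S=K'\cup I'$. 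The key structural point, exactly as in Theorem~\ref{thm:quadratic-split}, is that $G-S$ has essentially one big block $C_0$ spanned by $K'$, while every other block is a leaf edge $\{v,v'\}$ with $v\in K'$ and $v'\in I'$ a private degree-one neighbour, plus possibly isolated vertices of $I'$.

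\textbf{Main steps.} (1) \emph{Bound $K'$.} Every vertex of $I'$ has degree $\le d$ in $G$, so each vertex of $K'$ either lies in the big block $C_0$ or is matched to a private neighbour. By the $d^*$-split hypothesis, each vertex of $I'$ is adjacent to at most $d$ vertices of $K'$, and each vertex of $K'$ must have a neighbour in $I$ (some in $I'$, some in $I_S$). Counting edges between $I'$ and $K'\cup I_S$-exterior: I would show that after Rule~A has removed all degree-$1$ vertices of $G$ (the private neighbours in $I'$ whose clique-partner has no other constraint get absorbed, reducing $k$), and after Rule~$6^+$ is applied whenever $C_0$ is large — here $C_0$ is a clique minus positive edges, so picking a vertex $v\in C_0$ and two non-adjacent $I'$-neighbours of distinct $K'$-vertices... the cleaner route is: bound $|K'|$ by the number of edges leaving it, which by Lemma~\ref{Crow-Coro-5}-type arguments is $O(|S|\cdot k)=O(k^2)$ — but that gives quadratic, not linear. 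So the linear improvement must come from Rule~$6^+$: a large clique $C_0$ without positive edges, every vertex of which has a private neighbour in $I'$, lets one repeatedly apply Rule~$6^+$ with $c$ large (pick $v\in C_0$, and $u_1,\dots,u_c$ the private $I'$-neighbours of $c$ other vertices of $C_0$, which are pairwise non-adjacent and whose removal keeps the graph connected), each application decreasing $k$ by $c-1$. Hence once reduced, $|C_0|=O(k)$. (2) \emph{Bound $I'$.} Each remaining vertex of $I'$ has a neighbour in $S$ (else Rule~A or Rule~9 applies), so $|I'_1|\le d\cdot|S| = O(k)$ counting via $\sum|N_G(C_{\text{int}})\cap S|\le 3k(8k-3)$ — again quadratic unless we are more careful. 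The linear bound on $I'$ needs: reduced under Rule~$6^+$ means there is no vertex $v\in S\cup K'$ with two non-adjacent neighbours whose deletion keeps connectivity, so the number of such "pendant-like" $I'$-vertices attached to any single vertex is essentially $1$; combined with $|S|=O(k)$ and $|K'|=O(k)$, we get $|I'|=O(k)$. (3) Assemble $|V(G)|\le |S|+|C_0|+|I'| \le 4(d+1)k$, tracking constants.

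\textbf{The main obstacle.} The crux is getting the bounds \emph{linear} rather than quadratic: the Crowston et al.\ machinery (Lemmas~\ref{Crow-Coro-5}--\ref{Crow-Lemma-14}) gives $O(k^2)$, so the whole point is to replace those with a direct combinatorial argument exploiting (a) the degree-$\le d$ constraint on $I$, which makes the bipartite structure between $I'$ and its neighbours sparse, and (b) aggressive use of Rule~$6^+$ with large $c$, which collapses any long clique or large independent "fan" while dropping $k$ proportionally. I expect the delicate part is the bookkeeping showing that, after exhaustive application of Rule~$6^+$, Rule~A, and Rules~8--11, one cannot have more than $O(k)$ vertices in $C_0$ nor more than $O(k)$ pendant vertices in $I'$ — in particular, arguing that $C_0$ large forces either a \textsc{Yes}-instance (via $k'\le 0$ after Rule~$6^+$) or a contradiction with reducedness, and doing so with the right multiplicative constant $4(d+1)$. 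I would also verify that, as claimed, all the "is it a \textsc{Yes}-instance?" checks depend only on $S$ and the block structure of $G-S$, both polynomial-time computable, so the algorithm indeed needs to do nothing: it just checks whether the input already satisfies $|V(G)|\le 4(d+1)k$, outputting it if so and answering \textsc{Yes} otherwise.
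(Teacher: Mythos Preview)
Your plan follows the $S$-decomposition route of Proposition~\ref{coro4.1} and Rules~8--11, as in the quadratic kernel. The paper does \emph{not} do this here: it never invokes Proposition~\ref{coro4.1}, never computes $S$, and never uses Rules~8--11. Instead it works directly with an arbitrary split partition $(K,I)$ and proves two self-contained claims: (i) if $|K|\ge (d+1)k$ then repeated applications of Rule~$6^+$ and Rule~A drive the parameter to $\le 0$, so $(G,k)$ is a \textsc{Yes}-instance; (ii) if the set $I_h:=N_I(K_h)$ (where $K_h$ is the set of clique vertices with at least two $I$-neighbours) has size $\ge 2dk$, then again iterated Rule~$6^+$ forces \textsc{Yes}. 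When both fail, $|V|=|K|+|I_h|+|I_s|<(d+1)k+2dk+(d+1)k<4(d+1)k$ since $|I_s|\le|K_s|\le|K|$. This is considerably simpler than threading the argument through the clique-forest $G-S$.

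Your plan also contains a concrete error in the intended application of Rule~$6^+$: you propose to pick $v\in C_0$ and take $u_1,\dots,u_c$ to be the private $I'$-neighbours of \emph{other} vertices of $C_0$. But then the $u_i$ are not neighbours of $v$ at all (``private'' means their only clique-neighbour is some $w_i\neq v$), so Rule~$6^+$ does not apply to $\{v,u_1,\dots,u_c\}$. The paper's Claim~4 instead applies Rule~$6^+$ to $v$ together with $v$'s \emph{own} $I$-neighbours when there are at least two of them, and when $v$ has a single $I$-neighbour $u$, it uses the triple $\{v,u,w\}$ where $w$ is a clique vertex non-adjacent to $u$ (both $u$ and $w$ are neighbours of $v$, and they are non-adjacent to each other). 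Tracking how many clique vertices lose their last $I$-neighbour per step---at most $d+1$---yields the factor $(d+1)$.

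More broadly, your write-up remains at the level of a plan with the key step (``I expect the delicate part is the bookkeeping\dots'') explicitly left open; even after fixing the Rule~$6^+$ misapplication, the $S$-based route would still require new arguments to replace the inherently quadratic Lemmas~\ref{Crow-Coro-5}--\ref{Crow-Lemma-14}, and you have not supplied them.
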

\begin{proof} Given a signed $d^*$-split graph $G=(V,E)$ and an integer $k$, let $(K,I)$ be an arbitrary partition of $V$ into a clique $K$ and an independent set $I$ such that every vertex in $I$ has degree at most $d$ and every vertex in $K$ has at least one neighbor in $I$. We note that we do not even need to compute this partition $(K,I)$ of $V$, as we will just use it for the analysis.

Let $K_h$ be the set of vertices in $K$ that have at least two neighbors in $I$, let $K_s = K \setminus K_h$ (so by hypothesis, every vertex in $K_s$ has exactly one neighbor in $I$), let $I_h = N_{I}(K_h)$, and let $I_s = I \setminus I_h$. Note that since we assume that $G$ is connected, it holds that $|I_s| \leq |K_s| \leq |K|$. We now state two claims that will allow us to certificate in some cases that $(G,k)$ is a \textsc{Yes}-instance.

%
%
%
%



\begin{claimN}\label{claim:clique}
If $|K| \geq (d+1)k$, then $(G,k)$ is a \textsc{Yes}-instance.
\end{claimN}

\begin{proof} We apply to the input $(G,k)$ the following procedure, assuming that $G$ is connected (we stress that this algorithm is only used for the analysis; as mentioned before we do not modify our input graph at all):

\begin{enumerate}
\item If the current graph contains a vertex $v$ in the clique with $c \geq 2$ neighbors in the independent set, let $N$ be this set of neighbors, and do the following:
    \begin{itemize}
    \item[$\circ$] Apply Rule~$6^+$ to $\{v\} \cup N$, thus removing $\{v\} \cup N$ from the current graph and setting $k \leftarrow k - c +1$. Note that the resulting graph is clearly connected, so Rule~$6^+$ can indeed be applied to $\{v\} \cup N$.
     \item[$\circ$] Go back to Step~1.
    \end{itemize}
\item Apply exhaustively Rule A to the current graph, removing all vertices of degree 1. Clearly this operation preserves connectivity, and by Lemma~\ref{lem:RuleA} it produces an equivalent instance.
\item If the current graph contains a vertex $v$ in the clique with exactly 1 neighbor $u$ in the independent set, do the following:
    \begin{itemize}
    \item[$\circ$] Let $w$ be a vertex in the clique that is non-adjacent to $u$; since the degree of $u$ is at most $d$, such a vertex $w$ is guaranteed to exist as long as the size of the current clique is at least $d+1$.
    \item[$\circ$] Note that the removal of the vertices $\{v,u,w\}$ does not disconnect the current graph. Indeed, assume for contradiction that there exists a vertex $z$ in the independent set that gets disconnected after the removal of $\{v,u,w\}$. Since Rule A cannot be applied anymore to the current graph, $z$ cannot have degree 1, hence necessarily $z$ is adjacent to both $w$ and $v$. But then $v$ has at least 2 neighbors $u$ and $z$ in the independent set, and this contradicts the fact that we are in Step~3, since Step~1 could be applied to $v$ together with its neighborhood in the independent set.
     \item[$\circ$] Therefore, we can apply Rule~$6^+$ to  $\{v,u,w\}$, thus removing them from the current graph and setting $k \leftarrow k - 1$.
    \item[$\circ$] Go back to Step~2.
    \end{itemize}
 \item Stop the procedure.
\end{enumerate}


\noindent We now claim that if $|K| \geq (d+1)k$, then the instance $(G',k')$ output by the above procedure satisfies that $k' \leq 0$, and since $(G',k')$ is obtained from $(G,k)$ by applying the one-way reduction Rule~$6^+$ and the two-way reduction Rule A, Lemma~\ref{lem:6+-safe} and Lemma~\ref{lem:RuleA} imply that  $(G',k')$ is also a \textsc{Yes}-instance, as we wanted to prove. Indeed, suppose that Step~1 has been applied $p_1$ times, that Rule~A in Step~2 has been applied $p_2$ times, and that Step~3 has been applied $p_3$ times. Since in Step~1 it holds that $c \geq 2$, it follows that $k' \leq k - (p_1 + p_2 + p_3)$, so it suffices to prove that $p_1 + p_2 + p_3 \geq k$.

With slight abuse of notation, let us denote by $K^*$ the set of vertices in the current clique that have some neighbor in the independent set. Note that as long as $|K^*| > 0 $, Step 1, 2, or 3 can be applied to the current graph. For each application of Step~1, the size of $K^*$ decreases by 1. For each application of Rule~A in Step~2, the size of $K^*$ decreases by at most 1. Finally, for each application of Step~3, the size of $K^*$ decreases by at most $d+1$ (at most $d$ neighbors of vertex $u$ together with vertex $w$ may not belong to $K^*$ anymore). Since by assumption $G$ is a $d^*$-split graph, initially we have that $|K^*| \geq (d+1)k$, and therefore it follows that $p_1 + p_2 + p_3 \geq k$, concluding the proof.\end{proof}

\begin{claimN}\label{claim:indep-set}
If $|I_h| \geq 2dk$, then $(G,k)$ is a \textsc{Yes}-instance.
\end{claimN}
\begin{proof}
We prove that if $|I_h| \geq 2dk$, then we could iteratively apply Rule~$6^+$ to $(G,k)$ until obtaining an instance $(G',k')$ with $k' \leq 0$, which implies that $G$ is a  \textsc{Yes}-instance.

We apply to $(G,k)$ the following procedure (we stress again that this algorithm is only used for the analysis; as mentioned before we do not modify our input graph at all). We set $i:=1$, $G^1 := G$, $K_h^1 := K_h$,  $I_h^1 := I_h$, and $k_1  := k$. Note that $G^1$ is connected by hypothesis. For the sake of readability, we want to stress that the sets  $K_h^{i}$ and $I_h^{i}$ defined below will {\sl not} correspond to the intersection of $G^i$ with $K_h$ and $I_h$, respectively.  Proceed as follows:

\begin{enumerate}
\item Let $v_i$ be an arbitrary vertex in $K^i_h$, let $N^i := N_{I_h^i}(v_i)$, and let $c_i := |N^i|$. Note that $c_i \geq 2$ and that  $G^i - (\{v_i\} \cup N^i)$ is connected.
\item Apply Rule~$6^+$ to $\{v_i\} \cup N^i$, and let $G^{i+1} := G^i - (\{v_i\} \cup N^i)$.
\item Let $D_K^i$ be the set of vertices in $K_h^i$ having at most one neighbor in $I_h^i \setminus N^i$, and let $D_I^i := N_{I_h^i  \setminus N^i} (D_K^i)$; see Fig.~\ref{fig-linear-kernel} for an example of these sets for $d=3$. Since every vertex in $K_h^i$ has at least two neighbors in $I_h^i$ and each vertex in $I_h^i$ has degree at most $d$, it follows that $|D_K^i| \leq (d-1)|N^i| = (d-1)c_i$. On the other hand, by definition it holds that $|D_I^i| \leq |D_K^i| \leq (d-1)c_i$.
        \item Let $K_h^{i+1}:= K_h^{i} \setminus (\{v_i\} \cup D_K^i \cup N_{K_h^{i}}(D_I^i))$,  $I_h^{i+1}:= I_h^{i} \setminus (N^i \cup D_I^i )$, and $k_{i+1} := k_i - c_i +1$. Note that $ |I_h^{i}| - |I_h^{i+1}| = |N^i| + |D_I^i| \leq c_i + (d-1)c_i = dc_i$, and that by construction, each vertex in $K_h^{i+1}$ has at least two neighbors in $I_h^{i+1}$.
\item If $|I_h^{i+1}| > 0$, update $i \leftarrow i+1$ and go back to Step 1. Otherwise, stop the procedure.
\end{enumerate}

\begin{figure}[h]
\begin{center}
 \includegraphics[width=.5\textwidth]{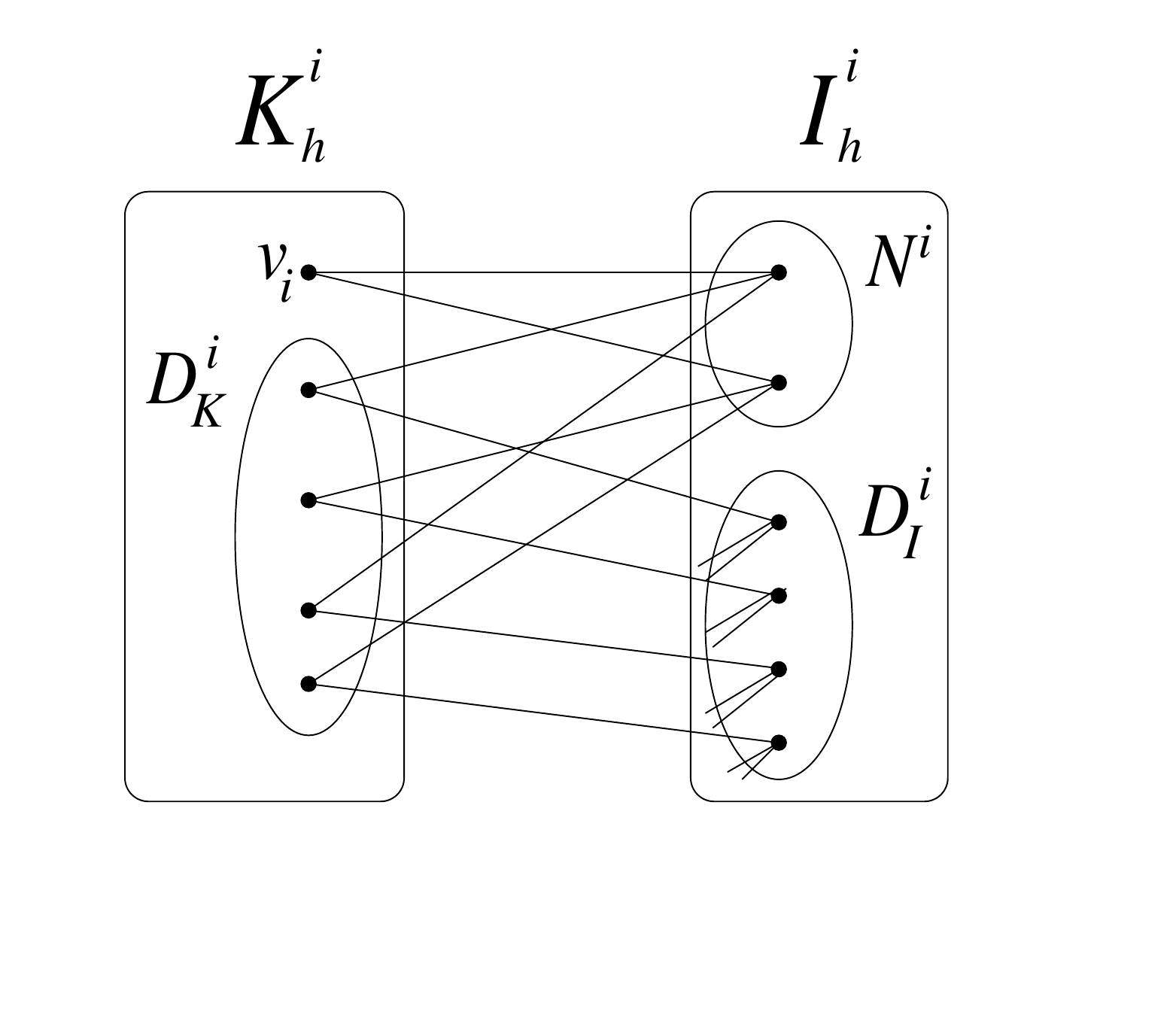}\vspace{-.6cm}
\caption{Example for $d=3$ of the sets defined in the proof of Theorem~\ref{thm:linear-d-split}.} \label{fig-linear-kernel}
\end{center}
\end{figure}

Assume that the above procedure has been applied $p \geq 1$ times, and let $k' := k_{p+1}$. Note that $k' = k - \sum_{i=1}^p (c_i -1)$. Since $k'$ is the parameter obtained from $k$ by iteratively applying Rule~$6^+$ in $G$ to the sets described in Step~2 above, our objective is to prove that $k' \leq 0$.

By the condition in Step~5 above, necessarily $|I_h^{p+1}| = 0$. Since for every $1 \leq i \leq p$ we have that, as discussed in Step~4 above, $ |I_h^{i}| - |I_h^{i+1}| \leq  dc_i$, using the hypothesis that $|I_h^1| = |I_h|\geq 2dk$ it follows that
$$
2dk \leq |I_h^1| = (|I_h^1| - |I_h^2|) + (|I_h^2|-|I_h^3|) + \ldots +  (|I_h^p| - |I_h^{p+1}|) = \sum_{i=1}^p (|I_h^i| - |I_h^{i+1}|) \leq d \cdot \sum_{i=1}^p c_i,
$$
and therefore it follows that $\sum_{i=1}^p c_i \geq 2k$. As for every $1 \leq i \leq p$ it holds that $c_i \geq 2$ (see Step~1 above) and the function $x \mapsto x-1$ is greater than or equal to the function $x \mapsto x/2$ for $x \geq 2$, we have that $\sum_{i=1}^p (c_i - 1) \geq \sum_{i=1}^p c_i / 2 \geq k$. Finally,
$$
k' = k - \sum_{i=1}^p (c_i -1) \leq  k - k = 0,
$$
and therefore we can conclude that  $(G,k)$ is a \textsc{Yes}-instance, as claimed.
\end{proof}

Assume now that the hypothesis of Claim~\ref{claim:clique} and Claim~\ref{claim:indep-set} are both false, that is, that $|K| < (d+1)k$ and $|I_h| < 2dk$. Then we have that
$$
|V| = |K| + |I| = |K| + |I_h| + |I_s| < (d+1)k + 2dk + (d+1)k < 4(d+1)k.
$$

Thus, if the above inequality is not true, that is, if $|V| \geq 4(d+1)k$, then necessarily at least one of the hypothesis of Claim~\ref{claim:clique} or Claim~\ref{claim:indep-set} is true, and in any case we can safely conclude that $(G,k)$ is a \textsc{Yes}-instance. Therefore, our linear kernel is extremely simple: if $|V| \geq 4(d+1)k$, we report that $(G,k)$ is a \textsc{Yes}-instance, and otherwise we have that $|V| < 4(d+1)k$, as desired.\end{proof}

\newpage

\section{Conclusions}
\label{sec:conclusions}

%
%

In this article we presented a quadratic kernel for  {\sc{Signed Max Cut ATLB}} when the input graph is an $(r, \ell)$-graph, for any fixed positive value of $r$ and $\ell$. Etscheid and Mnich~\cite{EtscheidM16} recently presented, among other results, a {\sl linear} kernel for this problem on {\sl general} graphs, improving the results of the current paper, as well as those of Crowston et al.~\cite{crowston2012max}. Like us, the linear kernel of Etscheid and Mnich~\cite{EtscheidM16} also relies on a number of one-way reduction rules and results given by Crowston et al.~\cite{crowston2012max}. The key idea is to use two new two-way reduction rules (similar to those of Crowston et al.~\cite{crowston2012max}) that, roughly speaking, maintain the connectivity among the blocks of $G-S$. The analysis of the kernel size in~\cite{EtscheidM16} is considerably more involved that ours and entails, in particular, an elaborated ``discharging'' argument on the number of vertices removed by the rules.

Concerning {\sf{FPT}}-algorithms, it may be possible that {\sc{Signed Max Cut ATLB}} can be solved in {\sl subexponential} time on $(r,\ell)$-graphs or, at least, on split graphs. We leave this question as an open problem.

\acknowledgements
\label{sec:ack}
We would like to thank the anonymous referees for helpful remarks that improved the presentation of the manuscript.

\bibliographystyle{abbrv}	
\bibliography{maxbalancsg2-Ignasi}
\label{sec:biblio}

\end{document}